\definecolor{green}{rgb}{0.0, 0.5, 0.5}
\definecolor{yellow}{rgb}{0.5, 0.5, 0}
\definecolor{lgray}{gray}{0.9}
\definecolor{llgray}{gray}{0.95}
\definecolor{lllgray}{gray}{0.975}
\theoremstyle{plain}%
\newtheorem{theorem}{Theorem}
\newtheorem{proposition}[theorem]{Proposition}%
\newtheorem{lemma}[theorem]{Lemma}
\theoremstyle{remark}%
\newtheorem{remark}{Remark}
\theoremstyle{definition}%
\numberwithin{equation}{section}
\numberwithin{theorem}{section}
\numberwithin{corollary}{section}
\newcommand{\inn}[2]{\left\langle#1,\,#2\right\rangle}
\DeclareMathOperator{\supp}{supp}
\newcommand{\di}{\partial}
\newcommand{\br}[1]{\left\langle#1\right\rangle}
\newcommand{\roun}[1]{\left(#1\right)}
\newcommand{\eps}{\epsilon}
\newcommand{\g}{\gamma}
\newcommand{\al}{\alpha}
\newcommand{\Rb}{\mathbb{R}}
\newcommand{\om}{\omega}
\renewcommand{\l}{\lambda} 
\newcommand{\abs}[1]{\ensuremath{\left\lvert#1\right\rvert}}
\newcommand{\sbr}[1]{\left[#1\right]}
\newcommand{\Set}[1]{\left\{#1\right\}}
\newcommand{\md}[6]{\ensuremath{
		\ifinner
		\tfrac{\partial{^{#2}}#1}{\partial{#3^{#4}}\partial{#5^{#6}}}
		\else
		\tfrac{\partial{^{#2}}#1}{\partial{#3^{#4}}\partial{#5^{#6}}}
		\fi
}}
\newcommand{\del}[1]{\left(#1\right)}
\newcommand{\thmref}[1]{Theorem~\ref{#1}}
\newcommand{\secref}[1]{Section~\ref{#1}}
\newcommand{\lemref}[1]{Lemma~\ref{#1}}
\newcommand{\propref}[1]{Proposition~\ref{#1}}
\newcommand{\figref}[1]{Figure~\ref{#1}}
\newcommand{\cD}{\mathcal{D}}
\newcommand{\cE}{\mathcal{E}}
\newcommand{\cF}{\mathcal{F}}
\newcommand{\cL}{\mathcal{L}}
\newcommand{\nc}{\newcommand}
\nc{\h}{\delta}
\nc{\G}{\Gamma}
\nc{\et}{\eta} 
\nc{\gam}{\gamma}
\nc{\ka}{\kappa}
\nc{\lam}{\lambda}
\nc{\Lam}{\Lambda}
\nc{\ta}{\tau}
\nc{\w}{\omega}
\nc{\io}{\iota}
\nc{\s}{\sigma}
\nc{\vphi}{\varphi}
\nc{\e}{\epsilon}
\nc{\ran}{\rangle}
\nc{\lan}{\langle}
\nc{\bfone}{{\bf 1}}
\newcommand{\p}{\partial}
\nc{\dd}{\mathrm{d}}
\newcommand{\DETAILS}[1]{}
\DeclareMathOperator{\Rem}{Rem}
\DeclareMathOperator{\dG}{\mathrm{d}\Gamma}
\newcommand{\ondel}[1]{\inn{\varphi}{{#1}\varphi}}
\newcommand{\wzdel}[1]{\inn{\psi_0}{{#1}\psi_0}}
\newcommand{\wtdel}[1]{\inn{\psi_t}{{#1}\psi_t}}
\newcommand{\Mr}[1]{\mathrm{#1}}
\nc{\hc}{\mathrm{h.c.}}
\nc{\TRem}{\widetilde \Rem}
\nc{\HRem}{\widehat \Rem}
\begin{document}
	
	\title[Microscopic propagation of long-range quantum many-body systems]{On the microscopic propagation speed of long-range quantum many-body systems}
	
	\author{Marius Lemm}
	\address{Department of Mathematics, University of T\"ubingen, 72076 T\"ubingen, Germany }
	\email{marius.lemm@uni-tuebingen.de}
	
	\author{Carla Rubiliani}
	\address{Department of Mathematics, University of T\"ubingen, 72076 T\"ubingen, Germany }
	\email{carla.rubiliani@uni-tuebingen.de}
	
	\author{Jingxuan Zhang
	}
	\address{Yau Mathematical Sciences Center, Tsinghua University, Beijing 100084, China }
	\email{jingxuan@tsinghua.edu.cn}
	
	\date{November 14, 2023}
	\subjclass[2020]{35Q40   (primary); 81P45   (secondary)}
	\keywords{Maximal propagation speed; quantum many-body  systems; quantum information; quantum light cones}

	\begin{abstract}
		
		We consider the time-dependent Schr\"odinger equation that is generated on the bosonic Fock space by a long-range quantum many-body Hamiltonian. We derive the first bound on the maximal speed of particle transport in these systems that is thermodynamically stable and holds all the way down to microscopic length scales. For this, we develop a novel multiscale rendition of the ASTLO (adiabatic spacetime localization observables) method. Our result opens the door to deriving the first thermodynamically stable Lieb-Robinson bounds on general local operators for these long-range interacting bosonic systems.
		
	\end{abstract}
	\maketitle
	
	\section{Introduction}
	
	Let $\Lam$ be a finite subset of a lattice $\cL\subset \Rb^d,\,d\ge1$.
	We consider {solutions of the} the many-body Schr\"odinger equation
	\begin{align}\label{SE}
		i\di_t\psi_t = H_\Lam \psi_t\qquad \textnormal{ on } \cF(\ell^2(\Lam)),
	\end{align}
 where $\cF(\ell^2(\Lam))$ denotes the bosonic Fock space and $H_\Lam$ is a long-range quantum many-body  Hamiltonian {acting} on $\cF(\ell^2(\Lam))$; see \eqref{H}--\eqref{eq:fock} for the definition.

In this paper, we rigorously identify simple sufficient conditions on the Hamiltonian and the initial states to obtain a bound on the maximal particle propagation speed of solutions to \eqref{SE}. Specifically, 
	we prove that for $H_\Lam$ with polynomial decaying hopping and arbitrary density-density interaction terms, 
	any solution $\psi_t$ to \eqref{SE} with initial density uniformly bounded  by $\l>0$ (i.e., $\sup_{x\in\Lam}\wzdel{n_x}\le \l$) obeys the following estimate: There exist $\kappa,\,C>0$ independent of $\l,\,\Lam$, and $\psi_0$, such that for any $v>\kappa,\, r>0$, and $\eta>1$,
	\begin{align}
		\label{MVB0}
		\sup _{0\le {t}<\eta/v}\wtdel{N_{B_r}} \le (1+C\eta^{-1})\inn{\psi_0}{{N_{B_{r+\eta}}}\psi_0}+C\l.
	\end{align}
	Here $B_h:=\Set{x\in\Lam:\abs{x}\le h}$ and $N_X$ denotes the particle number operator for $X\subset \Lam$. 
	We also obtain analogous estimates to \eqref{MVB0} for arbitrary moments $\wtdel{N_{B_r}^p}$ for $p\ge1$ {under stronger (but still polynomial) decay assumptions on the hopping}. Higher moment bounds are useful in many applications because they provide stronger tail bounds via Markov's inequality. See \secref{secIntro} for detailed setup and \thmref{pMVB main} for  precise statement of our main result for all $p\geq 1$.
	
	In physical terms, our propagation estimate \eqref{MVB0} ensures that bosons with long-range hopping and interactions {move at most with speed $\kappa$} as long as the initial state has uniformly bounded density, see \eqref{UDBp}. (Bounded-density initial states include the physically most relevant class of \textit{Mott states}, which are tensor product states in the particle number eigenbasis with uniformly bounded local eigenvalues, that are also studied experimentally \cite{cheneau2012light,cheneau2022experimental}.) 
 In fact, our method yields an explicit bound on the maximal speed of particle propagation $\kappa$, see \eqref{kappa}, which has a natural interpretation as a one-particle group velocity operator. In particular, $\kappa$ is independent of time and total particle number. 

 {Controlling bosonic propagation is a central topic of mathematical physics. The main challenge is that bosons can in principle accumulate in unbounded ways, so operator norms need to be avoided. Bosons with long-range hopping and interactions combine the obstacles coming from (a) unboundedness and (b) long-range nature of the terms in the Hamiltonian. Here we develop a method to overcome these obstacles, a new multiscale rendition of the ASTLO (adiabatic spacetime localization observables) method, which combines recursive differential inequalities and microlocal-inspired commutator expansions. We expect the approach to be useful more broadly for analyzing quantum dynamics problems which combine features (a) and (b).}

	\subsection{Organization}
	In Section \ref{secIntro}, we lay out our setup in detail and present the main result, \thmref{pMVB main}. Afterwards, we compare our result to the literature (Section \ref{sec:rr}) and summarize the proof strategy and key challenges (Section \ref{sec:proofstrategy}).

	In \secref{secPrelim}, we set up the proof with various definitions and preliminary lemmas. In \secref{sec proof MVB}, we prove \thmref{pMVB main} for the first moment case $p=1$. In \secref{sec proof multi} we complete the proof of \thmref{pMVB main} for all higher moments.
	
	\subsection{Notation}

	Throughout the paper, we fix a lattice domain $\Lam\subset L$. 	We denote by $\cD(A)$ the domain of an operator $A$ and by
	$\|\cdot\|$   the norm of  operators on the bosonic Fock space $\cF(\ell^2 (\Lam))$. 	We make no distinction in notation between a function $f$ on $\Lam$ and the associated multiplication operator $\psi(x)\mapsto f(x)\psi(x)$.
	
	We write $\abs X= \#\Set{x\in \Lam: x\in X}$ for $X\subset \Lam$, $B_h=\Set{x\in\Lam:  \abs{x}\le h}$, and $S_h=\di B_h = \Set{x\in\Lam: \abs x= h}$ for $h>0$. 
	
		
		\section{Setup and Main Results}\label{secIntro}
		Let $\mathcal{L}\subset\mathbb{R}^d$ be a lattice equipped with the Euclidean metric and fix a finite $\Lam\subset\cL$. We assume that any two distinct sites on $\cL$ are separated by a distance greater than one and there exist constants $\omega_d,V_d>0$ such that the following polynomial growth conditions are verified for all $h>0$:
		\begin{align}
			\label{growthCond1}
			\abs{B_h}\le& V_d h^d,\\
			\abs{S_h}\le& \om_{d-1} h^{d-1}.\label{growthCond2}
		\end{align}

		We consider  quantum many-body  systems described by long-range  Hamiltonians. {The Hamiltonian is a linear unbounded operator of the forms}
		\begin{align}\label{H} 
			\boxed{H_\Lam =  \sum_{x,y \in \Lambda} J_{xy} a_x^*a_y + V}
		\end{align} 
		acting on the usual bosonic Fock space 
		\begin{equation}\label{eq:fock}
		\mathcal F(\ell^2(\Lambda))=\mathbb C\oplus  \bigoplus_{N=1}^\infty \mathcal S_N\left(\bigotimes_{j=1}^N \ell^2(\Lambda)\right),
		\end{equation}
		with $\mathcal S_N$ denoting the projection onto the permutation-symmetric subspace of $\bigoplus_{N=1}^\infty \bigotimes_{j=1}^N \ell^2(\Lambda)$.
		Here $a_x$ and $a^*_x$ are the bosonic annihilation and creation operators, respectively, {which means that they satisfy the canonical commutation relations $[a_x,a_y]=[a_x^\dagger,a_y^\dagger]=0$ and $[a_x,a_y^\dagger]=\delta_{x,y}$.} Moreover,  
  $J=(J_{xy})$ is a Hermitian $|\Lambda|\times |\Lambda|$ matrix representing the energy of individual particles, and $V=\Phi(\Set{n_x}_{x\in\Lam})$ is an arbitrary real-valued function of the bosonic number operators, $n_x:=a_x^*a_x$, which
		describes an arbitrary density-density interaction.
		{For background on the second quantization formalism and quantum many-body systems, see for example \cites{MR1441540, MR3012853, MR4470244}. In particular, \cite[App.\ A]{MR4470244} proves $H_\Lam$ is self-adjoint on the dense domain $\mathcal D(H_\Lam)=\{(\psi_N)_{N\geq 0}\in \mathcal F(\ell^2(\Lambda))\,:\, \sum_{N\geq 0}\|H_\Lambda \psi_N\|^2<\infty \}$ in $\mathcal F(\ell^2(\Lambda))$.

			\subsection{Assumptions and main result}
			Recall that $N_X:=\sum_{x\in X}n_x$ is the particle number operator for $X\subset \Lam$. We consider solutions $\psi_t$ to the many-body Schr\"odinger equation \eqref{SE} with initial datum $\psi_0$ satisfying the following uniform density bound: For some $\l>0$ and integer $p\ge1$, 
			{\begin{align}\label{UDBp}
				\wzdel{ N_X^p}\le(\l|X|)^p \text{ for all }X\subset \Lam.
			\end{align}}
%

			{Our main assumption for the Hamiltonian \eqref{H} is that} the hopping matrix $J=(J_{xy})$ satisfies the following power-law decay for some $C_J>0$ and  power $\al>2d+1$,
			\begin{align}\label{Jcond}
				\abs{J_{xy}}\le& C_J \abs{x-y}^{-\al}\qquad (x,y\in\Lam,\,x\ne y).
			\end{align}
			Since $\al>2d+1$, it follows that for the integer $n:=\lfloor\al-d-1\rfloor$, the sum $\sum_{y\in\cL}|x-y|^{n+1-\al}$ is finite for all $x\in\cL$. {This, together with {\eqref{Jcond} and the separability assumption $\abs{x-y}\ge1$ for all $x\ne y$,} implies that}
			there exist $0<\kappa_0,\ldots,\kappa_n<\infty$ depending only on $C_J$ and $\al$ such that
			\begin{align}
				\sup_{x\in\Lam}\sum_{y\in\Lam}\abs{J_{xy}}\abs{x-y}^{\nu+1}\le& \kappa_\nu\qquad (\nu=0,\ldots,n).\label{Kcond}
			\end{align}
			A central role is played by $\kappa_0$, the first moment of the hopping matrix, which as we will see bounds the many-body propagation velocity. 
			\begin{equation}\label{kappa}
				\kappa\equiv \kappa_0=\sup_{x\in\Lam}\sum_{y\in\Lam}\abs{J_{xy}}\abs{x-y}
			\end{equation}
			As we will demonstrate below, $\kappa$  yields an upper bound on the maximal velocity (i.e., the light cone slope). 
			Finally, to simplify notations, given an operator $A$ and initial state $\psi_0$, we write 
			\begin{align}
				\br{A}_t:=\br{\psi_t,A\psi_t},\quad \psi_t:=e^{-iH_\Lam t}\psi_0.
			\end{align}
			The main result of this paper is the following particle propagation bound for all moments.

			\begin{theorem}[Main result]\label{pMVB main}
				Let $p\ge1$ be an integer and assume that \eqref{Jcond} holds with 
    \begin{align}\label{alCond}
    {\al>\max\{3dp/2+1,2d+1\}}.
    \end{align}
				Then, for any ${v} > \kappa$
				{and $\delta_0>0$, there exists a positive constant ${C=C(\al,d, C_J, v,\delta_0,p)}$ such that for all $\l,\,R,\,r>0$ with $R-r>\max(\delta_0r,1)$  and initial states ${\psi_0\in\mathcal D(H_\Lam)\cap \cD(N^{p/2}_\Lam)}$ satisfying \eqref{UDBp},}
				\begin{align}\label{MVBp}
					\sup_{0\le{t}<(R-r)/v}\br{N_{B_r}^p}_t\le {\left(1+ C(R-r)^{-1}\right)} {{\br{N_{B_{R}}^p}_0}} + {C\l^p}.
				\end{align}
				
			\end{theorem}
			
			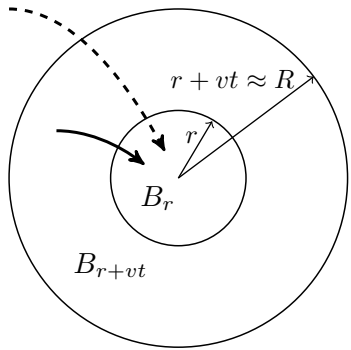
\begin{figure}[H]
				\centering
				
				\begin{tikzpicture}[scale=0.9]
					
					\draw[black,line width=0.2mm]
					(0,0) circle (1.)
					(0,0) circle (2.5);
					\draw[->,>=stealth', line width=0.4mm] (-1.8,0.7)  parabola (-0.5,0.2);
					\draw[->,>=stealth', dashed, line width=0.4mm] (-2.5,+2.5)  parabola (-0.2,0.4);
					
					\draw[->, line width=0.18mm] (0,0) --  (0.5,0.84);
					\draw[->, line width=0.18mm] (0,0)  -- (2,1.5);
					\node[scale=1] at (0.2, 0.6) {$r$};
					\node[scale=1] at (0.8, 1.45) {$r+vt\approx R$}; 
					\node[scale=1.1] at (-0.3, -0.3) {$B_r$};
					\node[scale=1.1] at (-1., -1.3) {$B_{r+vt}$};
					
				\end{tikzpicture}

				\caption{Explanation of the main  result (Theorem \ref{pMVB main}): Taking $R\approx r+vt$, we obtain  that only particles that were initially in the enlarged ball $B_{r+vt}$ can end up in $B_r$ after time $t$ (solid arrow). Particle transport from distances $\gg vt$ is suppressed (dashed arrow).} \label{fig explaining main result}
			\end{figure}
			
			The key constraint is $0\le{t}<(R-r)/v$. This can be rephrased as $R>r+vt$ and in most applications we take $R\approx r+vt$. The velocity bound $v$ is essentially $\kappa$ from \eqref{kappa}. Note that this is a natural bound (via the Schur test) on the matrix norm $\|[J,|x|]\|$. Physically, we can interpret $[J,|x|]$ as the velocity operator of a quantum particle whose kinetic energy is described by the $|\Lambda|\times |\Lambda|$ matrix $J=(J_{xy})$ and the norm of the velocity operator is then a natural one-body speed bound. In particular, the bound is \textit{optimal} in the non-interacting case (where $V=0$). As in other recent works (\cites{MR4419446, MR4470244, SiZh,LRSZ,osborne2023locality}), this shows that for bosonic quantum many-body Hamiltonians of the form \eqref{H}, particle transport can be controlled by the hopping part. {The condition \eqref{alCond} on $\alpha$ can probably be improved, but this will require new ideas.}


			

			\subsection{Literature review and discussion}\label{sec:rr}
			
			The derivation of propagation bounds for quantum many-body Hamiltonians on lattices is a hot topic. The last 20 years have seen critical advances in the derivation and application of quantum many-body propagation bounds known as Lieb-Robinson bounds (\cite{lieb1972finite}). Lieb-Robinson bounds establish the existence of a maximal speed of quantum propagation for all local observables in quantum spin systems and lattice fermions. As first discovered by Hastings, Lieb-Robinson bounds are among the few robust mathematical tools for resolving longstanding problems in mathematical physics. Early examples are \cite{hastings2004lieb,hastings2005quasiadiabatic,hastings2006spectral,bravyi2006lieb,nachtergaele2006lieb,nachtergaele2006propagation,hastings2007area} and many other extensions and applications of Lieb-Robinson bounds followed, cf. the reviews \cite{nachtergaele2010lieb,gogolin2016equilibration,chen2023speed}. As the understanding of finite- and short-range problems advances, more attention is directed to deriving useful propagation estimates for quantum many-body systems with long-range interactions which are more realistic in practice \cite{CGS,Fossetal,TranEtAl1,TranEtAl2,TranEtal4,KS_he,LRSZ,vanvu2023optimal}.
			We refer to \cite{DefenuEtAl} and the references therein for a more comprehensive review of the effect of long-range interactions  on the transmission of quantum information and to \cite{Bose} for a more introductory overview on the subject matter.
			Importantly both quantum spin systems and lattice fermions enjoy \textit{bounded} interactions. 
			
			By contrast, the mathematics of propagation bounds for \textit{bosonic} systems (that we consider here) has been considerably less developed because these have inherently \textit{unbounded} interactions. This leads the standard  method for proving propagation bounds (which are based on operator norm estimates) to break down. Over the years, several influential works considered special instances of bosonic systems and initial states (\cites{cramer2008locality,cramer2008exact,nachtergaele2009lieb,eisert2009supersonic,SHOE,junemann2013lieb,woods2015simulating,woods2016dynamical,wang2020tightening}). For example, the authors of the seminal 2011 work \cite{SHOE} obtained  exponential bounds on the propagation admitted by Bose-Hubbard Hamiltonians into the regions of space that are initially devoid of particles, a special situation that is relevant to releasing trapped particles. The bound comes with a prefactor given by the total number of particles $N$ and is therefore not stable in the thermodynamic limit where $N\to \infty$.  A bosonic Lieb-Robinson bound with velocity scaling as $\sqrt{N}$ was derived in 2020 (\cite{wang2020tightening}).
			The last two years saw rapid progress in the area of bosonic propagation bounds and the resolution of several longstanding problems (\cite{YL,KS2,MR4419446,MR4470244,LRSZ,KVS,vanvu2023optimal}).  

			{A \textit{microscopic} bound on boson transport --- such as our main result \eqref{MVB0} --- controls  the flow of particles on $\mathcal O(1)$ length scales, ideally starting from rather general initial states.} ({Another type of bound concerns macroscopic transport  \cite{MR4419446,van2023topological,vanvu2023optimal}, which is a rougher concept that we briefly review in Remark \ref{rmk:macro} for context.) 
   
   On the microscopic transport problem there have been several recent advances based on new ideas and methodology. 
			
			\begin{enumerate}[label=(\roman*)]
				\item \cite{MR4470244} bounded the maximal speed for the first moment ($p=1$) in general initial states. However, the error term depends on the total particle number and so the bound is not stable in the thermodynamic limit $N\to \infty$. An equivalent perspective is that the bounds are effective only on slightly mesoscopic length scales $\sim N^{\eta}$ with $\eta=\eta(\alpha)>0$.
				\item \cite{KVS} proved that the maximal speed for all moments ($p\geq 1$) is almost bounded (grows at most logarithmically in time) in the case of nearest-neighbour hopping and for general initial states. 
				\item \cite{LRSZ} proved that the maximal speed for the first moment ($p=1$) is bounded for any finite-range hopping and bounded-density initial states.  
			\end{enumerate}
			
			The speed bound in (i) only holds on mesoscopic scales $r\sim N^{\eta(\alpha)}$. The speed bounds in (ii) \& (iii) hold down to microscopic length scales, but they only cover finite-range interactions. {Bosonic Hamiltonians with long-range hopping combine the mathematical challenges coming from (a) the unboundedness and (b) the long-range nature. They are a natural frontier in our understanding of quantum propagation.} \textit{Our result is the first microscopic particle propagation bound that holds for long-range hopping.}

			To appreciate the relevance of our main result and how it fits into the recent literature, we summarize recent works on particle propagation bounds for lattice bosons --- and how our result fits into the literature --- in the following table.

			\begin{table}[H]
				\centering
				{\begin{tabular}{|c|c|c|c|}
						\hline
						Precision &Hopping range&Initial state&Ref.\\
						\hline
						macroscopic&long-range ($\alpha>d+2$)&general&\cite{MR4419446}\\
						&long-range ($\alpha>d+1$)&general&\cite{vanvu2023optimal}\\
						\hline
						mesoscopic &long-range ($\alpha>d+2$)& general &\cite{MR4470244}  \\
						\hline
						&nearest neighbour&$e^{-\mu N}$ &\cite{YL}\\
						microscopic&nearest neighbour& general & \cite{KVS} \\
						&finite-range & bounded density & \cite{LRSZ}\\ 
						&long-range ($\alpha>2d+1$)& bounded density & {this work} \\
						\hline
				\end{tabular}}
				\label{table}
				\caption{Overview of bounds proved on particle transport for bosonic quantum many-body dynamics in the last two years.}
			\end{table}
			
			Some comments about Table \ref{table} are in order. 
			
			\begin{enumerate}
				\item[(i)] While results in \cite{MR4419446,MR4470244} are stated for the Bose-Hubbard model, the proof of the maximal velocity bound applies \textit{verbatim} for arbitrary density-density interactions, as long as the hopping terms in the Hamiltonian have sufficiently fast polynomial decay (c.f.~\cite[Thm.~2.1, 2.4]{SiZh}). 
				\item[(ii)] This table only concerns bounds on the transport of particles, which is arguably the most fundamental form of quantum transport. More general bounds on propagation of quantum information are \textit{Lieb-Robinson bounds} which control the spreading of general local quantum observables. (E.g.\ experiments \cite{cheneau2012light} track the evolution of the boson parity operator $e^{i\pi n_x}$; the particle number bounds are insensitive to this.)  The recent {mathematical progress on bosonic propagation bounds also led to the first general bosonic Lieb-Robinson bounds, but} the bounds are weaker and depend in various ways on the initial state \cite{YL,MR4470244,LRSZ,KVS}. {Our result opens the door to improving and extending these results for long-range bosonic Hamiltonians; see Section \ref{sect:future}.}
			\end{enumerate}

			   \begin{remark}\label{rmk:macro}
   Macroscopic transport bounds are a rougher concept which was the content of \cite{MR4419446,vanvu2023optimal}, but not of this paper. They are essentially of the form
				\begin{equation}\label{eq:macro}
					P_{N_X\geq (1-\theta) N} e^{i\mathrm{t} H} P_{N_Y\geq \theta N} e^{-i\mathrm{t} H} P_{N_X\geq (1-\theta) N}\lesssim d_{XY}^{-n},\qquad 0<t<\frac{d_{XY}}{v}
				\end{equation}
				where $P_{N_X\geq \lambda}$ is the spectral projector of $N_X$ onto eigenvalues $\geq\lambda$, $0<\theta<\theta'<1$ and $v>\kappa$ are parameters, and $d_{XY}$ the distance between regions $X$ and $Y$. The macroscopic transport bound \eqref{eq:macro} says that it takes time proportional to $d_{XY}$ to move a macroscopic fraction $\theta'-\theta$ of the total boson number $N$ from regions $X$ to $Y$. In other words, it bounds the collective speed of large ``clouds''  of bosons.

				Such finite speed bounds on macroscopic transport were derived for general initial states under the condition $\alpha>d+2$ in \cite{MR4419446} which was improved to $\alpha>d+1$ in \cite{vanvu2023optimal}. These bounds are thermodynamically stable. (We call a bound thermodynamically stable if it is independent of the particle number $N$ and the lattice volume $\Lambda$.) The decay condition $\alpha>d+1$ is argued to be sharp (\cite{vanvu2023optimal}) and so the comparatively rough notion of \textit{macroscopic} particle transport for long-range bosons is basically fully understood now. 
    \end{remark}

			\subsection{Proof strategy and key challenges}\label{sec:proofstrategy}
			The centerpiece of our proof are approximate monotonicity estimates for certain \textit{adiabatic spacetime localization observables} (ASTLOs), which dynamically track the relevant local particle propagation. The approximate monotonicity estimate is derived by microlocal methods (in particular resolvent-based commutator expansion) where the role of the small parameter is played by the inverse of the radius of the large ball, $1/R$ (which we can essentially think of as $ 1/(vt)$).
			
			The ASTLO method was developed for bosonic quantum many-body systems in \cites{MR4419446, MR4470244, SiZh,LRSZ}. It is inspired by propagation estimates introduced in \cite{SigSof2} for few-body quantum mechanics in the continuum, further developed in \cite{Skib, HeSk, BoFauSig,APSS,BFLS, BFLOSZ,hinrichs2023lieb} and also recently applied to the nonlinear Hartree equation \cite{arbunich2023maximal}. Roughly speaking, ASTLOs monotonously decrease along the Heisenberg dynamic dual to \eqref{SE}, up to explicit time-decaying remainders. Moreover, ASTLOs have suitable geometric properties that render them comparable with the number operator associated with the propagation regions of interest. This, together with the approximate monotonicity, allows us to control the spacetime localization properties of the evolving system in the Heisenberg picture. 
			
			In proving our main result we have overcome two new technical challenges that were not treated in \cites{MR4419446, MR4470244, SiZh,LRSZ}. First, to obtain a thermodynamically stable bound that does not use any a priori bounds on particle numbers appearing in various remainder terms in the commutator expansion, we have to dynamically handle the contribution of particles from very far away. Specifically, to obtain the first moment bound \eqref{MVB0}, we resort to a multiscale induction, proceeding from large scales down to $\mathcal O(1)$ length scales. Note that this multiscale scheme proceeds downwards, from large to small length scales. The induction base at large  length scales are the propagation estimates found in our previous works (\cites{MR4470244, SiZh}) because these are thermodynamically stable on sufficiently large length scales.  This multiscale rendition of the ASTLO method is a main technical contribution in our work. 
			
			Our second goal in this work is to generalize \eqref{MVB0} to higher moments (see \eqref{MVBp}), which are relevant for applications via Markov's inequality. To this end, we have designed new ASTLOs of order $p\ge1$, which are comparable to localized number operators raised to the $p$-th power. To establish the approximate monotonicity for these higher-order ASTLOs, we control the Heisenberg derivative of the $(p+1)$-th order ASTLOs by those with order $p$ up to various commutator terms which we prove to be of lower order. Thus, we nest another upward induction over the moments into the downward multiscale scheme over length scales, thereby establishing the main result,  estimate \eqref{MVBp}, for all moments $p\geq 1$.
			
			\subsection{Future directions}\label{sect:future}
			Noticee that the condition on the power-law decay exponent $\alpha$ for which we obtain a bounded speed for the first moment is $\alpha>2d+1$. The threshold $\alpha>2d+1$ appears in the context of quantum information transport of long-range interacting systems: it has been identified as the sharp threshold for having a Lieb-Robinson bound with linear light cone (which is another way to say bounded speed) in \cites{KS1,KS2,EldredgeEtAl,TranEtal4,TranEtal5}. It is notably different from the threshold $\alpha>d+1$ for bounded speed that has been recently identified for \textit{macroscopic} particle transport \cite{MR4419446,vanvu2023optimal} (which we recall is a rougher way to track transport more in the context of statistical physics) and in \cite{MR4470244}. These results suggest that there could be fundamental qualitative differences between transport on microscopic versus macroscopic length scales in long-range quantum many-body systems and we plan to investigate this in future work.
			
			We close with another comment about the more general Lieb-Robinson bounds (LRBs). It turns out that microscopic particle propagation bounds are a key ingredient to prove LRBs for bosonic Hamiltonians of the form \eqref{H} \cite{MR4470244,KVS}. The rough idea is that, as mentioned above, the main obstacle to prove an LRB for bosons are large local interactions which are due to local accumulation of bosons. This accumulation is precisely what can be controlled by particle propagation bounds. Since our main result is the first thermodynamically stable microscopic particle propagation bound for long-range bosons, it opens the door to deriving the first thermodynamically stable LRBs for long-range bosons by modifying the blueprint (constructions of suitable ``truncated dynamics'') created in \cite{MR4470244,KVS} {for localized, respectively, finite-density initial states.}
			
			\section{Preliminaries}\label{secPrelim}
			
			In this section, we present some basic definitions and technical lemmas.
			
			\subsection{Function class $\mathcal{E}$}\label{secE}
			
			Take $\eps>0$ to be determined later. 
			We define the function class
			\begin{equation}\label{classE}
				\mathcal{E}\equiv \cE_\epsilon:=\Big\{f\in C^{\infty} (\mathbb{R})\: : \: f\geq0,\: f'\geq 0,\:\sqrt{f'}\in C^{\infty} (\mathbb{R}), \:  f\equiv 0 \text{ on } (-\infty,\epsilon/2),\: f\equiv 1 \text{ on } (\epsilon,\infty)\Big\}.
			\end{equation}
			Essentially, elements in $\cE$ are {cutoff} functions with compactly supported derivatives, see \figref{figF} below.
			Later on, we will specify the value of $\eps$ according to the values of $v$ entering the statement of \thmref{pMVB main}.
			
			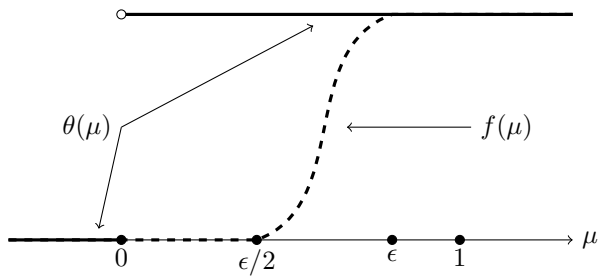
\begin{figure}[H]
				\centering
				\begin{tikzpicture}[scale=3]
					\draw [->] (-.5,0)--(2,0);
					\node [right] at (2,0) {$\mu$};
					\node [below] at (0,0) {$0$};
					\draw [fill] (0,0) circle [radius=0.02];
					
					\node [below] at (.6,0) {$\eps/2$};
					\draw [fill] (.6,0) circle [radius=0.02];
					
					\node [below] at (1.2,0) {$\eps$};
					\draw [fill] (1.2,0) circle [radius=0.02];
					
					\node [below] at (1.5,0) {$1 $};
					\draw [fill] (1.5,0) circle [radius=0.02];

					\draw [very thick] (-.5,0)--(0,0);
					\draw [very thick] (0,1)--(2,1);				
					\filldraw [fill=white] (0,1) circle [radius=0.02];
					
					\draw [dashed, very thick] (-.5,0)--(.6,0) [out=20, in=-160] to (1.2,1)--(2,1);

					\draw [->] (1.55,.5)--(1,.5);
					\node [right] at (1.55,.5) {$f(\mu)$};
					
					\draw [->] (0,.5)--(.85,.95);
					\draw [->] (0,.5)--(-.1,.05);
					\node [left] at (0,.5) {$\theta(\mu)$};
				\end{tikzpicture}
				\caption{A typical function $f\in \cE$ compared with the Heaviside function  $\theta(\mu)$.}\label{figF}
			\end{figure}
			
			Functions in $\cE$ are easy to construct. Indeed, for any $h\in C^\infty(\Rb)$ with $h\ge0$ and ${\supp h\subset (\eps/2,0, \eps)}$, let $f_1(\lam):=\int_{-\infty}^\lam h(s)ds$. Then  $f_1/\int_{-\infty}^\infty h(s)ds\in \cE$, and so we have $h \le C f'$ for some $f\in\cE$.
			
			Similarly, one could also check that 
			\begin{align}
				\label{E1}
				\text{If $f_1\in\cE$, then there exists $f_2\in\cE$ with $f_1'\le Cf_2'$. }
			\end{align}
			Moreover, 
			\begin{align}
				\label{E2}
				\text{If $f_1,f_2\in\cE$, then $f_1+f_2\le C f_3$ for some $f_3\in\cE$.}
			\end{align}

			We will frequently use the following symmetrized expansion formula for functions in $\cE$, whose proof is found in \cite{MR4470244}:
			\begin{lemma}[\cite{MR4470244}*{Lem.~2.2}]\label{lemSymExp}
				Let $n\ge1$ be an integer and $f\in\mathcal{E}$. Then, with $u:=(f')^{\frac{1}{2}}$ and, for $n\ge2$, functions $j_k\in\mathcal{E}$, $\Tilde{u}_k:=(j_k')^{\frac{1}{2}}$, $2\leq k\leq n$, there exist positive constants $C_{f,k}$ such that  for  all $x,y\in\mathbbm{R}$,
				\begin{equation}\label{eqSymExp}
					f(x)-f(y)=(x-y)u(x)u(y)+\sum_{k=2}^n (x-y)^k h_k (x,y)+R_n(x-y)^{n+1},
				\end{equation}
				where the sum should be dropped for $n=1$ and
				\begin{align}\label{hkEst}
					|h_k(x,y)|\leq& C_{f,k} \Tilde{u}_k(x)\Tilde{u}_k(y)\qquad (2\leq k\leq n),\\
					\label{RnEst}
					\abs{R_n(x,y)}\le& C_{f,n}.
				\end{align}
			\end{lemma}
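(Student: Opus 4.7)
The plan is to establish the symmetric expansion \eqref{eqSymExp} by combining the basic integral identity $f(x) - f(y) = \int_y^x u(z)^2 \, dz$ with a carefully symmetrized expansion of the integrand in derivatives of $u$ at the two endpoints $x$ and $y$. Starting from the pair of fundamental-theorem-of-calculus identities
\begin{equation*}
u(z) = u(y) + \int_y^z u'(s)\, ds = u(x) - \int_z^x u'(s)\, ds,
\end{equation*}
I would multiply them to obtain the splitting
\begin{equation*}
u(z)^2 = u(x) u(y) + u(x) \int_y^z u'\, ds - u(y) \int_z^x u'\, ds - \Bigl(\int_z^x u'\, ds\Bigr)\Bigl(\int_y^z u'\, ds\Bigr).
\end{equation*}
Integrating over $z \in [y,x]$ and applying Fubini, the constant term yields the leading contribution $(x-y)\, u(x) u(y)$, and the remaining three terms combine to give the $n = 1$ remainder $R_1(x,y)(x-y)^2$. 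Uniform boundedness of $R_1$ follows from $\|u\|_\infty, \|u'\|_\infty < \infty$ when $[y,x]$ intersects $[\eps/2,\eps]$, and from the trivial estimate $|f(x) - f(y)|\le 1$ together with the separation $|x-y|\ge \eps/2$ off this region, where $u(x)u(y)$ vanishes and the identity reduces to $f(x)-f(y)=R_1(x-y)^2$.

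For general $n$, I would proceed by induction: assuming the expansion holds up to order $n - 1$, I would Taylor-expand the $u$-derivatives entering the residual kernel symmetrically about both endpoints. Each symmetrization peels off a further factor of $(x-y)$ and produces a new coefficient $h_k(x, y)$ of the explicit symmetric product form
\begin{equation*}
h_k(x,y) = \sum_{a+b \le k-1} c_{a,b}^{(k)} \bigl[ u^{(a)}(x) u^{(b)}(y) + u^{(b)}(x) u^{(a)}(y) \bigr]
\end{equation*}
for suitable constants $c^{(k)}_{a,b}$. To verify \eqref{hkEst}, I exploit that each $|u^{(m)}|$ is smooth, nonnegative, and supported in $[\eps/2, \eps]$: for every $k$, I choose $j_k \in \cE$ whose derivative $j_k' = \tilde u_k^2$ dominates $\sum_{m \le k}|u^{(m)}|^2$ up to a multiplicative constant. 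This yields $|u^{(a)}(x)| \le C\,\tilde u_k(x)$ for all $a \le k$, and combined with the symmetric product form of $h_k$ this gives $|h_k(x,y)| \le C_{f,k}\,\tilde u_k(x)\,\tilde u_k(y)$. The estimate \eqref{RnEst} then follows by the same dichotomy used for $R_1$, with $\|u^{(n+1)}\|_\infty$ now playing the role of $\|u'\|_\infty$.

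The main difficulty will be preserving the symmetric product structure of each $h_k$ through all $n$ iterations. A one-sided Taylor expansion about $y$ alone would produce coefficients depending only on $y$, which cannot possibly satisfy \eqref{hkEst} when $x \notin \supp \tilde u_k$ while $y \in \supp \tilde u_k$; so the expansion must be symmetrized at every iteration, splitting each $u$-derivative evaluation evenly between expansions around $x$ and around $y$, so that the resulting coefficients genuinely factor as symmetric products of $u$-derivatives at the two endpoints rather than collapsing to a single variable. The combinatorial bookkeeping that tracks how these symmetric products accumulate order by order is the delicate part of the argument, and one should expect to use the freedom to add antisymmetric pieces (which cancel after integration in $z$) in order to keep the form prescribed above.
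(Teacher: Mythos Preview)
The paper does not supply its own proof of this lemma; it is quoted from \cite{MR4470244}*{Lem.~2.2}, so there is no in-paper argument to compare against. Your strategy---writing $f(x)-f(y)=\int_y^x u(z)^2\,dz$ and expanding $u(z)$ simultaneously about both endpoints---is the correct idea and is in line with how such symmetrized expansions are produced.

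Two remarks. First, your dichotomy for bounding $R_1$ is slightly miscast: the case ``$[y,x]\cap[\eps/2,\eps]=\emptyset$'' actually forces $f(x)=f(y)$ and $u(x)u(y)=0$, so $R_1=0$ there trivially; the nontrivial split is rather by the size of $|x-y|$, and when exactly one of $x,y$ lies in $\supp u$ with $|x-y|$ small you must use that $f$ and $1-f$ vanish to \emph{infinite order} at $\eps/2$ and $\eps$ respectively (this is where the hypothesis $\sqrt{f'}\in C^\infty$ enters). Second, the inductive scheme you sketch is more opaque than necessary. A direct double Taylor expansion is cleaner: write $u(z)=\sum_{a=0}^{n-1}\tfrac{u^{(a)}(x)}{a!}(z-x)^a+O((z-x)^{n})$ and the analogous expansion about $y$, multiply the two polynomials, and integrate term by term in $z$ over $[y,x]$. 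Each product $(z-x)^a(z-y)^b$ integrates to a constant multiple of $(x-y)^{a+b+1}$, giving $h_k(x,y)=\sum_{a+b=k-1}c_{a,b}\,u^{(a)}(x)u^{(b)}(y)$ in one step. The product bound \eqref{hkEst} then follows because every $u^{(m)}$ vanishes to infinite order at the endpoints of $[\eps/2,\eps]$, so any $j_k\in\cE$ with $j_k'>0$ on $(\eps/2,\eps)$ dominates $\sum_{m<k}|u^{(m)}|^2$ up to a constant. This bypasses the iterative bookkeeping you flag as delicate.
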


			\subsection{Commutator expansion}
			Fix a lattice domain $\Lam\subset\cL$. For any function $g\in \ell^\infty(\Lam)$, the second quantization map, $\dG$,  is given by
			\begin{align}
				\label{dGDef}
				\dG (g):=\sum_{x\in\Lam}g(x)n_x,\quad n_x=a_x^*a_x.
			\end{align}
			The following commutator expansion formula is a consequence of the canonical commutator relation:
			\begin{lemma}[c.f.~\cite{MR4470244}*{Lem.~A.2}]\label{lem:HRf-com}
				Let $H_\Lam$ be as in \eqref{H}. Let $g\in\ell^\infty(\Lam)$. In the sense of forms on $\mathcal{D}(H_\Lam)\cap\mathcal{D}(N_\Lam)$, we have
				
				\begin{equation}\label{HRf-com}
					\sbr{H_\Lam,\dG(g)}=-\sum_{x,y\in\Lambda}J_{xy}\del{g(x)-g(y)}a_x^{*}a_y.
				\end{equation}
			\end{lemma}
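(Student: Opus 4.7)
The plan is to reduce the identity to a computation with the canonical commutation relations, handled carefully as a form identity on the stated domain. First I would split $H_\Lam = T + V$ where $T := \sum_{x,y}J_{xy}a_x^*a_y$ is the hopping and $V = \Phi(\{n_x\})$ is the interaction. Since $V$ is a function of the commuting family $\{n_x\}_{x\in\Lam}$ and $d\Gamma(g) = \sum_z g(z)n_z$ is also a function of the same family, the two commute on number-sector subspaces, so $[V, d\Gamma(g)] = 0$ and only $[T, d\Gamma(g)]$ contributes.

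The core step is then to compute $[a_x^*a_y, n_z]$ for $x,y,z \in \Lam$. Using the CCR $[a_z, a_w^*] = \delta_{zw}$, I would expand
\begin{equation*}
[a_x^*a_y, n_z] = [a_x^*a_y, a_z^*]a_z + a_z^*[a_x^*a_y, a_z] = \delta_{yz}a_x^*a_z - \delta_{xz}a_z^*a_y = (\delta_{yz} - \delta_{xz})a_x^*a_y.
\end{equation*}
Summing against $J_{xy}g(z)$ and relabelling gives
\begin{equation*}
[T, d\Gamma(g)] = \sum_{x,y\in\Lam}J_{xy}\bigl(g(y) - g(x)\bigr)a_x^*a_y = -\sum_{x,y\in\Lam}J_{xy}\bigl(g(x)-g(y)\bigr)a_x^*a_y,
\end{equation*}
which is exactly \eqref{HRf-com}.

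The only genuine issue is interpreting everything as an identity of \emph{forms} on $\mathcal D(H_\Lam)\cap \mathcal D(N_\Lam)$, since $T$, $V$, and $d\Gamma(g)$ are all unbounded on $\mathcal F(\ell^2(\Lam))$ and one cannot a priori move creation/annihilation operators around freely. I would deal with this by first verifying the commutator identity on the dense subspace of finite-particle vectors (i.e.\ vectors supported on finitely many $N$-sectors with finitely many excited modes), where all operators act as finite sums and the algebraic manipulation above is literal, and then polarizing the quadratic form and extending by continuity using boundedness of $g$ together with the hypothesis $\psi \in \mathcal D(H_\Lam) \cap \mathcal D(N_\Lam)$. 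The latter ensures that each of $\langle d\Gamma(g)\phi, H_\Lam\psi\rangle$, $\langle H_\Lam\phi, d\Gamma(g)\psi\rangle$, and $\langle \phi, [T,d\Gamma(g)]\psi\rangle$ is individually well-defined and finite, making the limiting step legitimate. This domain bookkeeping is the only subtle part; the algebra itself is a direct application of the CCR, and a more detailed treatment of the same form identity is already given in \cite{MR4470244}*{Lem.~A.2}, to which I would refer for the functional-analytic details.
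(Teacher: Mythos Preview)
Your proposal is correct and follows essentially the same approach as the paper: split off $V$ (which commutes with $d\Gamma(g)$ since both are functions of the $n_z$'s), then compute $[a_x^*a_y,n_z]$ via the CCR and sum. The paper's own proof is slightly terser on the form-domain issue, but your added discussion of finite-particle vectors and the density/continuity extension is a reasonable elaboration of what the paper leaves implicit.
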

			\begin{proof} 
				By definition \eqref{dGDef},  $\mathrm{d}\Gamma(g)$ commutes with any $n_z$ and therefore with arbitrary functions of the bosonic number operators $V=\Phi(\Set{n_z}_{z\in\Lam})$ (see \eqref{H}). Thus we have
				\begin{align} \label{H0Rf-com}
					[H_\Lam,\mathrm{d}\Gamma(g)] = \sum_{x\in\Lambda,y\in\Lambda} J_{xy} [a_x^*a_y, \dG(g) ]&=\sum_{x\in\Lambda,y\in\Lambda}\sum_{z\in\Lambda}J_{xy}  g(z)[a_x^*a_y, a_z^*a_z].\end{align}
				By the canonical commutation relation $[a_x,a_x^*]=\delta_{ij}$, we have $[a_x^*a_y, a_z^*a_z]=-a_x^*a_y$ for $z=x$, $=a_x^*a_y$ for $z=y$, and $=0$ elsewhere. 
				These facts, together with expression \eqref{H0Rf-com}, give \eqref{HRf-com}.
			\end{proof}
			

			
			\subsection{ASTLOs --- adiabatic spacetime localization observables}
			Let $v>\kappa$ with $\kappa$ defined in \eqref{kappa}, and $v':=\frac12(\kappa+v)$. For any function $f\in L^\infty(\Rb)$, $t\in\Rb$, and $R,\,s>0$, we define 
			\begin{equation}\label{ftsDef}
				f_{ts}(x)\equiv f(|x|_{ts}):=f\left(\frac{R-v't-|x|}{s}\right).
			\end{equation}
			Of particular interest are  the time-dependent observables
			\begin{align}\label{propag-obs1}
				N_{f,ts}:=\dG(f_{ts}), \quad  f \in \cE.
			\end{align}
			Following \cites{MR4419446,MR4470244,LRSZ,SiZh}, we call \eqref{propag-obs1} adiabatic spacetime localization observables (ASTLOs). As we will see later, 
			control over the evolution of ASTLOs grants the same over the spacetime localization properties of evolving states. 
			
			
			\section{Proof of \thmref{pMVB main} for $p=1$}\label{sec proof MVB}
			
			In this section, we prove our main result,  \thmref{pMVB main}, for the first moment case.

			\begin{theorem}[Thermodynamically stable propagation estimates]\label{MVBmain}
				Let \eqref{Jcond} hold with ${\al>2d+1}$. Then, for every $v > \kappa$ {and $\delta_0>0$, there exists $C=C(\al,d, C_J, v,\delta_0)>0$ such that for all $\l,\,R,\,r>0$ with $R-r>\max(\delta_0r,1)$  and initial states $\psi_0\in\cD(N^{1/2}_\Lam)$ satisfying \eqref{UDBp} with $p=1$,}
				\begin{align}
					\label{MVB1}
					\sup_{0\le t <(R-r)/v}\br{N_{B_r}}_t \le (1+C(R-r)^{-1})\br{N_{B_{R}}}_0+C\l.
				\end{align}
			\end{theorem}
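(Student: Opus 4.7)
The plan is to control $\br{N_{B_r}}_t$ via adiabatic spacetime localization observables (ASTLOs), establish an approximate monotonicity with explicit remainders through a commutator expansion, and then close the argument by a downward multiscale induction whose base case is supplied by the thermodynamically stable mesoscopic estimates of \cite{MR4470244, SiZh}. Fix $f\in\cE$ (with the parameter $\eps$ to be chosen small in terms of $v-\kappa$), set $v':=(\kappa+v)/2$, and form $N_{f,ts}=\dG(f_{ts})$ with $f_{ts}(x)=f((R-v't-|x|)/s)$. Provided $\eps s\le(R-r)(v-\kappa)/(2v)$, one has $f_{ts}\ge\1_{B_r}$ on the whole time window $0\le t<(R-r)/v$ while $f_{0s}\le\1_{B_R}$, so
\[
\br{N_{B_r}}_t\le\br{N_{f,ts}}_t,\qquad \br{N_{f,0s}}_0\le\br{N_{B_R}}_0,
\]
and it suffices to control $\br{N_{f,ts}}_t$.

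Next I compute the Heisenberg derivative $DN_{f,ts}=\pt N_{f,ts}+i[H_\Lam,N_{f,ts}]$. The transport piece is $\pt f_{ts}=-v's^{-1}f'_{ts}$. By \lemref{lem:HRf-com} the commutator equals $-\sum_{x,y}J_{xy}(f_{ts}(x)-f_{ts}(y))a_x^*a_y$, and I apply the symmetrized expansion of \lemref{lemSymExp} at order $n=\lfloor\al-d-1\rfloor$ with $u=\sqrt{f'}$. The leading piece, combined with the transport term and bounded via the Schur test, the estimate $||x|-|y||\le|x-y|$, and the definition \eqref{kappa} of $\kappa$, produces
\[
DN_{f,ts}\le -\frac{v'-\kappa}{s}\dG(u_{ts}^2)+\Rem_{ts},
\]
where each summand of $\Rem_{ts}$ is, by \eqref{hkEst}--\eqref{RnEst}, Cauchy-Schwarz, and the higher moment bounds \eqref{Kcond}, operator-dominated by $C s^{-k}\kappa_{k-1}\dG((j'_k)_{ts})$ for some $j_k\in\cE$, $2\le k\le n+1$. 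The hypothesis $\al>2d+1$ is precisely what guarantees that enough of the moments $\kappa_\nu$ are finite for this expansion to run.

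The heart of the argument is a downward multiscale induction. For $s\ge 1$ with $R-r\ge s$, consider the property
\[
\mathcal P(s):\quad \sup_{0\le t<(R-r)/v}\br{N_{f,ts}}_t\le\bigl(1+C_s/s\bigr)\br{N_{B_R}}_0+C_s\lam.
\]
The base case $\mathcal P(s_0)$ at some large $s_0=s_0(\al,d,C_J,v)$ is the thermodynamically stable maximal velocity bound of \cite{MR4470244,SiZh}. For the inductive step, drop the negative term in the monotonicity inequality, integrate, and take expectations to obtain
\[
\br{N_{f,ts}}_t\le\br{N_{B_R}}_0+\int_0^t\br{\Rem_{\tau s}}_\tau\,d\tau.
\]
Each remainder summand $s^{-k}\br{\dG((j'_k)_{\tau s})}_\tau$ is controlled by comparing the scale-$s$ ASTLO with the coarser scale-$2s$ ASTLO $N_{f,\tau(2s)}$, to which $\mathcal P(2s)$ applies; the shell support of $j'_k$ has cardinality $\lesssim s^d$ via \eqref{growthCond2}, and the uniform density bound \eqref{UDBp} converts the $\lam$-piece. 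Power counting using $\al>2d+1$ lets the $s^{-k}$ and $s^d$ balance into exactly the $C_s s^{-1}\br{N_{B_R}}_0+C_s\lam$ correction required by $\mathcal P(s)$. Iterating geometrically from $s_0$ down to $s\asymp 1$, and exploiting $R-r>\max(\delta_0 r,1)$ to absorb factors of $R/(R-r)$, proves \eqref{MVB1}.

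The main obstacle, absent in the finite-range setting of \cite{LRSZ}, is that the commutator remainders do not vanish for $|x-y|\gg s$ but couple $B_r$ to arbitrarily far regions via the power-law tail of $J_{xy}$; consequently they cannot be absorbed at the same scale $s$. The multiscale scheme above — bounding each scale-$s$ remainder by a coarser scale-$2s$ ASTLO and bootstrapping downward from the mesoscopic regime — is the key new ingredient, and the threshold $\al>2d+1$ is what keeps the moments \eqref{Kcond} finite at every intermediate scale so that the geometric iteration converges without any $N$- or $\Lam$-dependent prefactor.
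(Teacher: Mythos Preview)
Your high-level strategy matches the paper's: ASTLOs, symmetrized commutator expansion at order $n=\lfloor\alpha-d-1\rfloor$, and a downward multiscale induction whose base case is the mesoscopic bound from \cite{MR4470244,SiZh}. The gap is in the treatment of the top-order remainder.

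You assert that every remainder summand, including the one at $k=n+1$, is operator-dominated by $Cs^{-k}\kappa_{k-1}\,\dG((j_k')_{ts})$ for some $j_k\in\cE$. This is false for $k=n+1$: by \eqref{RnEst} the residual $R_n(x,y)$ is merely bounded by a constant, with \emph{no} factor $\tilde u(x)\tilde u(y)$ that would localize it to a shell. Indeed, a localized bound at order $n+1$ would require $\kappa_{n+1}<\infty$, i.e.\ $\alpha>n+d+2$, which fails for $n=\lfloor\alpha-d-1\rfloor$. After invoking the support relation \eqref{fLocRel}, the top-order piece is the genuinely nonlocal quantity
\[
s^{-(n+1)}\Bigl(\langle N_{B_{R-\eps s/2}}\rangle_t+\sum_{x\in B_{R-\eps s/2}}\sum_{y\in\Lambda}|J_{xy}|\,|x-y|^{n+1}\langle n_y\rangle_t\Bigr),
\]
exactly the $\Rem(t)$ of \eqref{RemDef}. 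Your inductive step (``compare with the scale-$2s$ ASTLO and apply $\mathcal P(2s)$'') captures only the contribution from $|y|\lesssim 2R$; the far-field tail $|y|\gg R$ is untouched. In the paper this is handled by decomposing the $y$-sum into dyadic shells $2^l<|y|\le 2^{l+1}$ for \emph{all} $l>k$, applying the induction hypothesis at \emph{each} larger scale, and summing the resulting geometric series, which converges because $\gamma:=\alpha-n-1>d$. Going up a single scale does not close.

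A secondary point: the intermediate remainders $2\le k\le n$ are not disposed of in the paper by a shell-volume-times-density bound as you suggest (that bound reads $s^{-k}\cdot s^d$ after integrating in $t\le s$ and does not close for small $k$ when $d\ge 2$). Instead, the paper bootstraps the integral inequality \eqref{propag-est2} for $\int_0^t\langle N_{j_k',\tau s}\rangle_\tau\,d\tau$ back into itself, pushing all intermediate terms down to the single nonlocal remainder at order $s^{-n}$ before running the multiscale induction.
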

			\thmref{MVBmain} is proved at the end of this section.

			The starting point of our proof strategy, following \cites{MR4419446,MR4470244,LRSZ,SiZh}, is to derive monotonicity estimates for the ASTLOs, \eqref{propag-obs1}. Compared to our previous results, in Props.~\ref{lemLocIntRME}--\ref{propMVB1} below, we refine the monotonicity estimates to nail down the localization properties of the error terms. This enables us to subsequently use a multiple-scale argument in a backward induction scheme to {consecutively treat the contribution of particles at various length scales and thus	remove the dependence on total particle number in the remainder terms. See the proof of Proposition \ref{propMultScale} for details and  \figref{fig1} below for an illustration.}
			
			\begin{figure}[H]
				\centering
				
				\begin{tikzpicture}[scale=.8]

					\draw[black,line width=0.2mm]
					(0,0) circle (1.5)
					(0,0) circle (3.);
					
					\draw[->,>=stealth', line width=0.4mm] (1.8,1)  parabola (0.2,0.2);
					\draw[->,>=stealth', dashed, line width=0.4mm] (3.5,-2)  parabola (0.2,-0.5);
					
					\node[scale=1.1] at (-1.3, 1.3) {$2^{k}$};
					\node[scale=1.1] at (-2.43, 2.43) {$2^{k+1}$};
					\node[scale=1.1] at (-1.3, 1.3) {$2^{k}$};
					\node[scale=1.1] at (-0.7, -0.9) {I};
					\node[scale=1.1] at (-1.2, -2) {II};
					\node[scale=1.1] at (-2, -3) {III};
					
				\end{tikzpicture}

				\caption{In the proof of \thmref{MVBmain}, at each induction step we control the particles moving from region II to region I, as the contribution from region III into region I have been already accounted for by the induction hypothesis.  } \label{fig1}
			\end{figure}
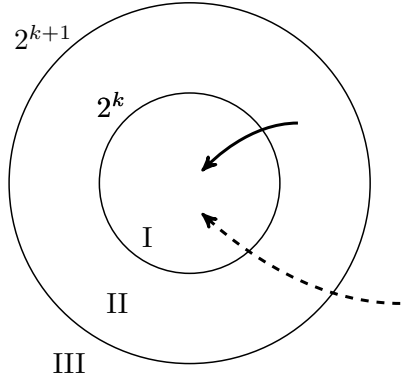

			To begin with, from Lems.~\ref{lemSymExp}--\ref{lem:HRf-com} we derive the following proposition. 
			\begin{proposition}\label{lemLocIntRME} Let \eqref{Kcond} hold for $n\ge1$.
				Then, for any $v>\kappa$,  $f\in\mathcal{E}$, there exist ${C=C(f,n,v)>0}$ and, for $n\ge2$, functions $j_k\in\mathcal{E}$, $2\leq k\leq n$, such that for all $t,s,{R}>0$,
				\begin{align}\label{intRME}
					\int_0^t \langle N_{f', \tau s }\rangle_{\tau}d\tau
					\leq& C\del{ s \langle N_{f, 0s }\rangle_0+\sum_{k=2}^ns^{-k+2}\langle N_{j_k,0s}\rangle_0+s^{-n} t\Rem(t)},\\
					\Rem(t):=&\sup_{t'\le t}\del{\langle N_{B_{ {R}-\eps s/2}}\rangle_{t'}+\sum_{x\in B_{{R}-\eps s/2}}\sum_{y\in\Lambda}\abs{J_{xy}}|x-y|^{n+1}\langle n_y\rangle_{t'} }.\label{RemDef}
				\end{align} 
				The sum in \eqref{intRME} is dropped for $n=1$.
			\end{proposition}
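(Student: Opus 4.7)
The plan is to derive an approximate monotonicity inequality for $N_{f,ts}$ by computing its Heisenberg derivative, bounding the resulting commutator via \lemref{lemSymExp} and a Schur test, and then integrating and iterating to eliminate the residual integral terms. First I would compute
\[
DN_{f,ts} := \partial_t N_{f,ts} + i[H_\Lam, N_{f,ts}] = -\frac{v'}{s} N_{f', ts} - i \sum_{x,y \in \Lam} J_{xy}\bigl(f_{ts}(x) - f_{ts}(y)\bigr) a_x^* a_y,
\]
where the first term comes from $\partial_t f_{ts}(x) = -(v'/s) f'(|x|_{ts})$ and the second from \lemref{lem:HRf-com}. Applying \lemref{lemSymExp} to $f(|x|_{ts}) - f(|y|_{ts})$ splits the commutator into a leading piece $T_1$ proportional to $u(|x|_{ts})u(|y|_{ts}) = \sqrt{f'_{ts}(x)f'_{ts}(y)}$, intermediate pieces $T_k$ for $2\le k\le n$ controlled by the $h_k$ factors, and a bounded remainder $R$ of weight $(|y|-|x|)^{n+1}/s^{n+1}$.

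Next I would bound these pieces in form sense. With $A_x := u(|x|_{ts}) a_x$, the Schur test applied to the matrix with entries $|J_{xy}|\cdot\bigl||y|-|x|\bigr|$ (whose rows and columns sum to at most $\kappa$ by the triangle inequality and \eqref{kappa}) yields $|T_1|\le (\kappa/s) N_{f',ts}$, so combining with $\partial_t N_{f,ts} = -(v'/s) N_{f',ts}$ produces the crucial contractive bound $\partial_t N_{f,ts} + T_1 \le -\frac{v'-\kappa}{s} N_{f',ts}$ with $v'-\kappa>0$. The same Schur argument combined with \eqref{hkEst} and \eqref{Kcond} gives $|T_k| \le \frac{C_{f,k}\kappa_{k-1}}{s^k} N_{j_k',ts}$ for suitable $j_k\in\cE$. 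For the remainder, the decisive observation is that $f_{ts}$ is supported in $B_{R-v't-\eps s/2}\subset B_{R-\eps s/2}$ (since $f\equiv 0$ on $(-\infty,\eps/2)$), hence $\bigl(f_{ts}(x)-f_{ts}(y)\bigr)a_x^*a_y = 0$ whenever both $|x|, |y|$ lie outside $B_{R-\eps s/2}$. Restricting the remainder sum to pairs with at least one site in $B_{R-\eps s/2}$, using $|\langle a_x^*a_y\rangle|\le \tfrac12(\langle n_x\rangle+\langle n_y\rangle)$, and exploiting $|J_{xy}|=|J_{yx}|$, one bounds $|R|$ in form sense by $\frac{C_{f,n}}{s^{n+1}}$ times the integrand defining $\Rem(t)$.

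Taking expectation, integrating over $[0,t]$, and using $\langle N_{f,ts}\rangle_t\ge 0$ together with the monotonicity of the supremum in $\Rem$ yields the basic monotonicity inequality
\[
\int_0^t\langle N_{f',\tau s}\rangle_\tau\, d\tau \le Cs\langle N_{f,0s}\rangle_0 + \sum_{k=2}^n\frac{C_k}{s^{k-1}}\int_0^t\langle N_{j_k',\tau s}\rangle_\tau\, d\tau + \frac{C}{s^n}\, t\Rem(t).
\]
To eliminate the residual integral terms I would iterate this inequality, applying it with each $j_k$ in the role of $f$ and consolidating intermediate $\cE$-functions via properties \eqref{E1}, \eqref{E2}. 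After a finite number of iterations (of order $n$) every remaining $\int\langle N_{\tilde j',\tau s}\rangle_\tau\, d\tau$ carries prefactor at most $s^{-n}$, and the crude bound $\langle N_{\tilde j',\tau s}\rangle_\tau\le C\langle N_{B_{R-\eps s/2}}\rangle_\tau\le C\Rem(t)$ absorbs them into $(C/s^n)\,t\Rem(t)$; collecting the initial value terms at each order $s^{-k+2}$ and consolidating via \eqref{E2} produces $\sum_{k=2}^n s^{-k+2}\langle N_{j_k,0s}\rangle_0$ as stated. The main obstacle is the localization of the remainder: it is precisely this refinement, absent from the ASTLO arguments in \cites{MR4419446,MR4470244,LRSZ,SiZh}, that removes the uncontrolled $\langle N_\Lam\rangle$ factor previously appearing in such error terms and is essential for the thermodynamic stability of all downstream estimates.
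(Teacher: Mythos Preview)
Your proposal is correct and follows essentially the same route as the paper's proof in \secref{secPf41}: compute the Heisenberg derivative via \lemref{lem:HRf-com}, expand the difference $f(|x|_{ts})-f(|y|_{ts})$ through \lemref{lemSymExp}, bound the leading and intermediate pieces by the Schur test against \eqref{Kcond}, localize the order-$(n+1)$ remainder to pairs with at least one site in $B_{R-\eps s/2}$ using the support property of $f\in\cE$, and then integrate and iterate using \eqref{E1}--\eqref{E2}. You have also correctly identified the localization of the remainder as the new ingredient relative to \cites{MR4419446,MR4470244,SiZh}. The only cosmetic difference is your termination of the bootstrap: you stop once all residual integrals carry prefactor $s^{-n}$ and absorb them via $N_{\tilde j',\tau s}\le \|\tilde j'\|_\infty N_{B_{R-\eps s/2}}$, whereas the paper continues iterating until the integral sum is empty; both are valid and yield the same estimate.
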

			\propref{lemLocIntRME} is proved in \secref{secPf41}.
			
			Next,	by adapting the arguments in \cite[Sect.~2.2]{MR4470244}, \cite[Sect.~4.2]{SiZh}, we prove in \secref{secPf42} that
			\propref{lemLocIntRME} implies the following propagation estimate:
			\begin{proposition}\label{propMVB1} Let \eqref{Kcond} hold for $n\ge1$. 
				Then, for any $v>\kappa$, there exists ${C=C(n, v)>0}$ such that for all ${R}>r>0$ and $s=({R}-r)/v$,
				\begin{align}\label{propEst1}
					\sup_{t\le s}	\br{N_{B_r}}_t\le (1+Cs^{-1}){\br{N_{B_{R}}}_0+Cs^{-n}\Rem(s)}.
				\end{align}
				Here $\Rem(s)$ is given by \eqref{RemDef}.
			\end{proposition}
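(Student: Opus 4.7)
The plan is to dominate $N_{B_r}$ by a smooth ASTLO of the form \eqref{propag-obs1}, derive an approximate monotonicity estimate for its expectation via commutator expansion, and then bootstrap each integrated commutator remainder through Proposition \ref{lemLocIntRME} to obtain thermodynamically stable control purely in terms of $\langle N_{B_R}\rangle_0$ and $\Rem(s)$.

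First I would fix $v' = (v+\kappa)/2$ and $\epsilon = (v-\kappa)/2$, choose $g \in \cE_\epsilon$ from \eqref{classE}, and set $s = (R-r)/v$. A direct inspection of \eqref{ftsDef} gives $g_{ts}(x) \ge \mathbf{1}_{B_r}(x)$ for $0 \le t \le s$ (because $|x| \le r$ forces $(R-v't-|x|)/s \ge v-v' = \epsilon$), while $g_{0s} \le \mathbf{1}_{B_R}$ by construction, so $\langle N_{B_r}\rangle_t \le \langle N_{g,ts}\rangle_t$ and $\langle N_{g,0s}\rangle_0 \le \langle N_{B_R}\rangle_0$. The Heisenberg derivative splits as
\[
\frac{d}{d\tau}\langle N_{g,\tau s}\rangle_\tau = -\frac{v'}{s}\langle N_{g',\tau s}\rangle_\tau + i\langle [H_\Lam, N_{g,\tau s}]\rangle_\tau,
\]
and using Lemma \ref{lem:HRf-com} together with the symmetric expansion of Lemma \ref{lemSymExp} applied to $g(|x|_{\tau s}) - g(|y|_{\tau s})$ up to order $n$, combined with a Schur/AM-GM bound on $|\langle a_x^* a_y\rangle_\tau|$ and the moment estimate \eqref{Kcond}, one establishes the pointwise inequality
\[
\frac{d}{d\tau}\langle N_{g,\tau s}\rangle_\tau \le -\frac{v'-\kappa}{s}\langle N_{g',\tau s}\rangle_\tau + C\sum_{k=2}^n s^{-k}\langle N_{j_k',\tau s}\rangle_\tau + Cs^{-n-1}\Rem(\tau),
\]
in which the leading coefficient is strictly negative by the choice $v > \kappa$.

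Integrating from $0$ to $t \le s$ and discarding the non-positive principal term gives
\[
\langle N_{g,ts}\rangle_t \le \langle N_{g,0s}\rangle_0 + C\sum_{k=2}^n s^{-k}\int_0^t \langle N_{j_k',\tau s}\rangle_\tau\, d\tau + Cs^{-n}\Rem(s).
\]
To the surviving integrals I would apply Proposition \ref{lemLocIntRME} with $f = j_k$; since every $j \in \cE$ satisfies $\langle N_{j,0s}\rangle_0 \le \langle N_{B_R}\rangle_0$, this yields $\int_0^t \langle N_{j_k',\tau s}\rangle_\tau\, d\tau \le Cs\langle N_{B_R}\rangle_0 + Cs^{-n+1}\Rem(s)$. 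The dominant $k=2$ contribution produces precisely the $Cs^{-1}\langle N_{B_R}\rangle_0$ correction, all higher $k$ being $O(s^{-2})$ or smaller; combining with $\langle N_{g,0s}\rangle_0 \le \langle N_{B_R}\rangle_0$ and $\langle N_{B_r}\rangle_t \le \langle N_{g,ts}\rangle_t$ delivers \eqref{propEst1}. The delicate point is producing a $1 + Cs^{-1}$ prefactor rather than $1 + C$: this sharpness is exactly what the bootstrap through Proposition \ref{lemLocIntRME} at the $k=2$ step enforces, and it is what later allows the downward multiscale induction sketched around Figure \ref{fig1} to close.
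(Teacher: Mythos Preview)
Your proposal is correct and follows essentially the same approach as the paper: both arguments sandwich $N_{B_r}$ and $N_{B_R}$ by the ASTLO $N_{g,ts}$ via the choice $\epsilon=v-v'$, derive the differential inequality for the Heisenberg derivative from Lemmas \ref{lemSymExp}--\ref{lem:HRf-com}, integrate, and then feed the remaining $\int_0^t\langle N_{j_k',\tau s}\rangle_\tau\,d\tau$ terms back through Proposition \ref{lemLocIntRME} to produce the $Cs^{-1}$ correction. The only cosmetic difference is that the paper cites the intermediate inequality \eqref{propag-est1} already obtained in the proof of Proposition \ref{lemLocIntRME} rather than re-deriving it.
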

			\begin{remark}
				Note that, unlike estimate \eqref{intRME}, the auxiliary functions $f\in\cE$ etc.~do not enter the statement of \propref{propMVB1}.
			\end{remark}
			
			Estimate \eqref{propEst1} forms the basis of the proof of \thmref{MVBmain}.
			In the next proposition, we illustrate the iterative scheme with the choice ${R}=2^{k+1}$ and  $r=2^k$ for any $k>0$. The corresponding result for general ${R},\,r$ can then be obtained by straightforward adaption and is deferred to the end of this section.

			\begin{proposition}\label{propMultScale}
				Let \eqref{Jcond} hold with $\al>2d+1$. Then, for any $v > \kappa$, there exists some ${C=C(\al, n, C_J, v)>0}$ such that for all $\l,\,k\ge0$, and initial states with $\sup_{x\in\Lam}\br{n_x}_0\le \l$,
				\begin{align}
					\label{multEst}
					\sup_{vt\le 2^k}\br{N_{B_{2^k}}}_t\le (1+2^{-k}C) { {\br{N_{B_{2^{k+1}}}}_0+ {C\l} }}.
				\end{align}
			\end{proposition}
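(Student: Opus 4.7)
The plan is to prove \eqref{multEst} by downward induction on $k$, invoking Proposition \ref{propMVB1} at the single scale pair $(r,R)=(2^k,2^{k+1})$ and controlling the resulting remainder via the inductive hypothesis at larger scales combined with the initial uniform density bound $\sup_x\br{n_x}_0\le\lambda$. The base case is trivial: for any $k_0$ so large that $B_{2^{k_0+1}}\supset\Lambda$, particle conservation gives $\br{N_{B_{2^{k_0}}}}_t\le\br{N_\Lambda}_0=\br{N_{B_{2^{k_0+1}}}}_0$, so \eqref{multEst} holds with $C=0$ at every scale $\ge k_0$. Assuming \eqref{multEst} at every scale $m>k$, combining with $\br{N_{B_{2^{m+1}}}}_0\le\lambda|B_{2^{m+1}}|\le C\lambda 2^{md}$ yields the auxiliary estimate
\begin{equation}\label{auxbd}
\sup_{vt\le 2^m}\br{N_{B_{2^m}}}_t\le C\lambda 2^{md} \qquad (m>k),
\end{equation}
with $C$ uniform in $m,\lambda,\Lambda$.

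For the inductive step at scale $k<k_0$, fix an integer $n$ with $d\le n<\alpha-d-1$, which is possible by $\alpha>2d+1$. Proposition \ref{propMVB1} with $R=2^{k+1}$, $r=2^k$, $s=2^k/v$ gives
\begin{equation}
\sup_{vt\le 2^k}\br{N_{B_{2^k}}}_t\le(1+C2^{-k})\br{N_{B_{2^{k+1}}}}_0+C2^{-kn}\Rem(2^k/v),
\end{equation}
so it suffices to show $2^{-kn}\Rem(2^k/v)\le C\lambda$ uniformly in $k,\lambda,\Lambda$. The first piece of $\Rem$ is $\br{N_{B_{R'}}}_{t'}$ with $R'<2^{k+1}$, $t'\le 2^k/v$, which by \eqref{auxbd} at scale $k+1$ is $\le C\lambda 2^{kd}$; multiplied by $2^{-kn}$ it contributes $\le C\lambda$ since $n\ge d$.

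For the second piece $E(t')=\sum_{x\in B_{R'}}\sum_{y\in\Lambda}|J_{xy}||x-y|^{n+1}\br{n_y}_{t'}$, split the $y$-sum into near ($|y|\le 2^{k+2}$) and far ($|y|>2^{k+2}$) parts. The near contribution is controlled using \eqref{Kcond} and \eqref{auxbd} at scale $k+2$ by $\kappa_n\br{N_{B_{2^{k+2}}}}_{t'}\le C\lambda 2^{kd}$, again producing $\le C\lambda$ after the prefactor. For the far contribution, decompose $\{|y|>2^{k+2}\}$ into dyadic annuli $A_m=\{2^m\le|y|<2^{m+1}\}$, $m\ge k+2$; since $|x-y|\ge 2^{m-1}$ for $x\in B_{R'}$ and $y\in A_m$, \eqref{Jcond} yields $\sum_{x\in B_{R'}}|J_{xy}||x-y|^{n+1}\le C\, 2^{(k+1)d+m(n+1-\alpha)}$. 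Combining with $\br{N_{A_m}}_{t'}\le C\lambda 2^{md}$ from \eqref{auxbd} and summing the geometric series in $m$ (convergent since $n+1-\alpha+d<0$) gives $C\lambda 2^{k(2d+n+1-\alpha)}$; multiplication by $2^{-kn}$ yields $C\lambda 2^{k(2d+1-\alpha)}\le C\lambda$ since $\alpha>2d+1$. Collecting the three contributions completes the inductive step.

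The main obstacle is the dyadic-annulus estimate of the far part. The single prefactor $2^{-kn}$ delivered by Proposition \ref{propMVB1} must simultaneously absorb the volume $\sim 2^{kd}$ of the ball over which $x$ ranges, the volume $\sim 2^{md}$ of each annulus, and the summability constant of the geometric series in $m$. The polynomial decay $|J_{xy}||x-y|^{n+1}\lesssim|x-y|^{n+1-\alpha}$ handles all three losses precisely because the sharp threshold $\alpha>2d+1$ both forces $n\ge d$ (so the prefactor dominates the ball and annulus volumes) and opens a window to choose $n<\alpha-d-1$ (so the geometric series converges).
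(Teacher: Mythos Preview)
Your argument is correct and follows the same downward multiscale induction as the paper: apply Proposition~\ref{propMVB1} at the pair $(r,R)=(2^k,2^{k+1})$, then control the remainder $\Rem$ via the inductive hypothesis at larger scales after splitting the $y$-sum into near and far dyadic pieces. The one notable difference is your base case, which uses the elementary observation that $\br{N_{B_{2^k}}}_t\le\br{N_\Lambda}_0=\br{N_{B_{2^{k+1}}}}_0$ once $B_{2^{k+1}}\supset\Lambda$, whereas the paper invokes the earlier macroscopic estimate \cite[eq.~(2.9)]{SiZh} to start the induction at a $\Lambda$-dependent scale; your route is self-contained and equally valid.
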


			\begin{proof}
				Take a large constant $L>0$ s.th.~the domain $\Lam\subset [-L/2,L/2]^d$. Let $C_0>0$ be some constant independent of $a,\,L,\,k$ to be determined later.
				We prove \eqref{multEst} with any $C\ge C_0$ by a downscale induction on the size of $L$.

				1. First,	for the base case, we prove that there exists a large integer $K=K(L,\al,d)>0$ s.th. \eqref{multEst} holds for all $k\ge K$. 
				
				Let 
				\begin{align}\label{nDef}
					n:=
						\lfloor\al-d-1\rfloor.
					\end{align} Then, by the assumption $\al>2d+1$, condition \eqref{Kcond} holds with $n\ge d\ge1$.
					
					Thus, we can apply {\cite[eq.~(2.9)]{SiZh} (see also \cite[eq.~(7)]{LRSZ})}
					to obtain, for some $C_1>0$\textit{ independent of $k,\,\alpha,\,L$},
					\begin{align}\label{propEst3}
						\sup_{vt\le 2^k}\langle N_{B_{2^k}}\rangle_{t}\le (1+2^{-k}C_1) {\br{N_{B_{2^{k+1}}}}_0 + 2^{-kn}C_1\br{N_\Lam}_0}.
					\end{align}
					Under the assumptions $\sup_x\br{n_x}_0\le \l$ and $\abs{\Lam}\le L^d$, we have $\br{N_\Lam}_0\le \l L^d$, and so estimate \eqref{propEst3} becomes
					\begin{align}\label{propEst4}
						\sup_{vt\le 2^k}\langle N_{B_{2^{k}}}\rangle_{t}\le (1+2^{-k}C_1) {\br{N_{B_{2^{k+1}}}}_0 +  2^{-kn} L^d}C_1\l.
					\end{align}

					We conclude from \eqref{propEst4} that \eqref{multEst} holds for $C\ge C_1$, $k\ge K$ with 
					\begin{align}\label{Kchoice}
						K=\frac{d}{n}\log_2(L).
					\end{align} 
					This completes the proof of the base case.\\
					
					2. Next, assuming \eqref{multEst} holds for $k+1$, we prove it for $k$. 
					
					Recall definition \eqref{RemDef} for $\Rem(t)$. We apply \propref{propMVB1} with ${R}=2^{k+1}$ and $r=2^k$, whence $s=2^k/v$ and \eqref{propEst1} becomes
					\begin{align}\label{propEst2}
						\sup_{vt\le 2^k}	\br{N_{B_{2^k}}}_t\le&(1+2^{-k}C_2){\br{N_{B_{2^{k+1}}}}_0+2^{-nk}C_2\sup_{vt\le 2^k}\langle N_{B_{2^{k+1}}}\rangle_{t}}
						\notag\\
						&+2^{-nk}C_2\sup_{vt\le 2^k}\sum_{x\in B_{2^{k+1}}}\sum_{y\in\Lambda}\abs{J_{xy}}|x-y|^{n+1}\langle n_y\rangle_{t}.
					\end{align}
					The first term in the r.h.s.~of \eqref{propEst2} {is of the desired form} with prefactor bounded by $C_1$. To bound the second term, we claim that, for some dimensional constant $C_d>0$, 
					\begin{align}
						\label{propEst6}
						2^{-dk}\sup_{vt\le 2^k}\langle N_{B_{2^{k+1}}}\rangle_{t}\le C_d\,\l.
					\end{align}
					Indeed, suppose \eqref{propEst6} holds. Then,  since {$n\ge d$ by the assumption $\al>2d+1$} (see \eqref{nDef}), we have $2^{-nk}C_2\sup_{vt\le 2^k}\langle N_{B_{2^{k+1}}}\rangle_{t}\le C_d C_2\,\l$, as desired.
					
					Now we prove \eqref{propEst6}. 
					We compute,  using the induction hypothesis {and the volume growth condition \eqref{growthCond1},} that
					\begin{align}
						\label{propEst5}
						2^{-dk}\sup_{vt\le 2^k}\langle N_{B_{2^{k+1}}}\rangle_{t}\le& 2^{-dk}\del{\del{1+C_02^{-{(k+1)}}}\br{N_{B_{2^{k+2}}}}_0+C_0\l}\notag\\
						\le&\del{1+C_02^{-{(k+1)}}}V_d 2^{2d}\,\l+2^{-dk}C_0\l .
					\end{align}
					Here $V_d$ is the dimensional constant entering \eqref{growthCond1}. Since $C_0,\,C_2$ are both independent of $K$, increasing $K$ in \eqref{Kchoice} if necessary, we have for all $k> K$ that
					\begin{align}
						\label{414}
						1+C_02^{-k}\le 11/10, \quad 2^{-dk}C_0 \le 1/10.
					\end{align}
					Take  $C_d:=\frac{11}{10}V_d\cdot 2^{2d} +\frac{1}{10}$ for all $d$ to conclude from \eqref{propEst5} that \eqref{propEst6} holds.
					
					3.	It remains to bound the term appearing in line \eqref{propEst2}. 
					
					We write
					\begin{align}
						&2^{-nk}C_2\sup_{vt\le 2^k}\sum_{x\in B_{2^{k+1}}}\sum_{y\in\Lambda}\abs{J_{xy}}|x-y|^{n+1}\langle n_y\rangle_{t}\le \mathrm{I}+\mathrm{II},\label{12}
					\end{align}
					where
					\begin{align}
						\mathrm{I}:=&2^{-nk}C_2\sup_{vt\le 2^k}\sum_{\abs{x}\le 2^{k+1}}\sum_{\abs{y}\le 2^{k+1}}\abs{J_{xy}}|x-y|^{n+1}\langle n_y\rangle_{t},\label{Idef}\\ 
						\mathrm{II}:=&2^{-nk}C_2\sup_{vt\le 2^k}\sum_{\abs{x}\le 2^{k+1}} \sum_{l> k+1}   \sum_{2^l<\abs{y}\le 2^{l+1}} \abs{J_{xy}}|x-y|^{n+1}\langle n_y\rangle_{t}\label{IIdef}.
					\end{align}
					
					Owning to condition \eqref{Kcond}, the term $I$ in \eqref{12} can be bounded as 
					\begin{align}\label{Iest1}
						\Mr{I}\le&2^{-nk}C_2\del{\sup_{x\in\Lam}\sum_{y\in\Lam} \abs{J_{xy}}\abs{x-y}^{n+1}}
						\del{\sup_{vt\le 2^k}\sum_{\abs{y}\le 2^{k+1}}\langle n_y\rangle_{t}}\notag
						\\=& \kappa_n2^{-nk}C_2\sup_{vt\le 2^k}\br{N_{B_{2^{k+1}}}}_t.
					\end{align}

					Applying \eqref{propEst6} to \eqref{Iest1}, we find that for $n\ge d $, 
					\begin{align}\label{Iest2'}
						\Mr{I}	\le& \kappa_n C_dC_2\l.
					\end{align}
					This completes the  bound for the term $\Mr{I}$ in \eqref{Idef}.

					Next, set \begin{align}
						\label{gDef}
						\g:=\al-n-1.
					\end{align}
					Using growth condition \eqref{growthCond2}, the decay condition \eqref{Jcond}, and the reverse triangle inequality $\abs{x-y}\ge \abs{\abs{x}-\abs{y}}$, we bound  second term $\Mr{II}$ in \eqref{12}  as 
					\begin{align}
						\Mr{II}\le& 2^{-nk}C_2C_J\sup_{vt\le 2^k}\sum_{\abs{x}\le 2^{k+1}} \sum_{l> k+1} \sum_{2^l<\abs{y}\le 2^{l+1}}   \abs{x-y}^{-\g}\langle n_y\rangle_{t}\notag\\
						\le& 2^{-nk}C_2C_J \sup_{vt\le 2^k}\sum_{\abs{x}\le 2^{k+1}} \sum_{l> k+1}\sum_{2^l<\abs{y}\le 2^{l+1}}\del{2^l-\abs{x}}^{-\g} \langle n_y\rangle_{t} \notag\\
					\le & 2^{-nk}C_2C_J\om_{d-1}\sum_{m=0}^{2^{k+1}}m^{d-1} \sum_{l> k+1}\del{2^l-m}^{-\g}\sup_{vt\le 2^k}\br{N_{B_{2^{l+1}}}}_t\notag\\
					\le&\underbrace{C_2C_J2^{-(n-d)k}}_{\Mr{A}}\underbrace{\del{2^{-dk}\om_{d-1}\sum_{m=0}^{2^{k+1}}m^{d-1}}}_{\Mr{B}} \underbrace{\del{\sum_{l> k+1}2^{-(l-1)\g}\sup_{vt\le 2^k}\br{N_{B_{2^{l+1}}}}_t}}_{\Mr{C}}
					.\label{IIest1}
				\end{align}
			Here $\om_{d-1}$ is the dimensional constant entering \eqref{growthCond2}.  
			
			Since $n\ge d$, the term A is $O(1)$.  The term B can be bounded by \begin{align}
	\label{IIest11}
	\Mr{B}\le &	2^{-dk}\om_{d-1}\int_0^{2^{k+1}}m^{d-1}
	= \frac{\om_{d-1}}{d}2^{d}.
\end{align}
				To bound the term C, we write
				\begin{align}
					\label{IIest12}
					&\Mr{C}=
					\sum_{l> k+1}2^{-(l-1)\g+dl} 
					\del{2^{-dl}\sup_{vt\le 2^k}\langle N_{B_{2^{l+1}}}\rangle_{t}	}.
				\end{align}
				Using estimate \eqref{propEst6}, we have $2^{-dl}\sup_{vt\le 2^k}\langle N_{B_{2^{l+1}}}\rangle_{t}\le C_d\l$ uniformly for all $l>k+1$.  By the definitions of $n,\,\g$ in \eqref{nDef}, \eqref{gDef}, we have $\g>d$ and so $\sum_{l> k+1}2^{-(l-1)\g+dl} $ converges geometrically. Owning to these facts, we find that
				\begin{align}
					\label{IIest13}
					&\sum_{l> k+1}2^{-(l-1)\g}\sup_{vt\le 2^k}\br{N_{B_{2^{l+1}}}}_t\le \frac{C_d\cdot 2^{\g}}{1-2^{d-\g}} \l.
				\end{align}
				Combining \eqref{IIest11} and \eqref{IIest13} in \eqref{IIest1}, and again using the fact that $n\ge d$, we conclude 
				\begin{align}
					\label{IIest2}
					\Mr{II} \le\frac{C_J\om_{d-1}}{d}\times \frac{C_d\cdot 2^{d+\g}}{1-2^{d-\g}} C_2\l.
				\end{align}
				This, together with the estimate \eqref{Iest2'}, yields the desired upper bound for the term in line \eqref{propEst2}.

				4. Finally, we set
				\begin{align}
					\label{}
					C_0:= \max\Set{C_1,(1+\kappa_n)C_d C_2, \frac{C_J\om_{d-1}}{d}\times \frac{C_d\cdot 2^{d+\g}}{1-2^{d-\g}} C_2}.
				\end{align}
				Clearly, $C_0$ is independent of $a,\,k,\,L$. For any $C\ge C_0$, by estimates \eqref{propEst2}, \eqref{propEst6}, \eqref{IIest1}, and \eqref{IIest2}, we conclude that \eqref{multEst} holds for $k$. 
				
				This completes the proof of \propref{propMultScale}.	
			\end{proof}
			
			{
				\begin{proof}[Proof of \thmref{MVBmain}]
					We proceed by adapting the proof of \propref{propMultScale}. 
					Since ${R}-r>1$ by assumption, we will make a downward induction on the value of ${R}-r\ge 2^k$, $k=1,2,\ldots$. 
					
					For the base case, we use \cite[eq.~(2.9)]{SiZh}, the uniform density bound \eqref{UDBp} with $p=1$, and the volume bound $\abs{\Lam}\le L^d$ to obtain, for some $C=C(\al, n, C_J, v)>0$,
					\begin{align}\label{propEst4'}
						\sup_{ vt\le {R}-r}\langle N_{B_r}\rangle_{t}\le (1+({R}-r)^{-1}C) {\br{N_{B_{R}}}_0 +  ({R}-r)^{-n} L^d}C\l.
					\end{align}
					Thus, for $K=\frac{d}{n}\log_{{R}-r}(L)$ and all $k\ge K$, we have the desired estimate \eqref{MVB1} for ${R}-r\ge 2^k$.
					
					For the induction step, assuming \eqref{MVB1} holds for  all $R,\,r$ with $R-r\ge 2^{k+1}$, 
					we prove the following key uniform estimate: for some $C'=C'(\al, n, C_J, v,\delta_0)>0$  and all $2^{k+1}>R-r\ge 2^k$,
					\begin{align}\label{propEst6'}
						(R-r)^{-d}\sup_{vt\le R-r}\langle N_{B_R}\rangle_{t}\le C'\l.
					\end{align}
					Let $(R',r')= (3R-{2r},R)$. Then $R'-r'=2(R-r)\ge 2^{k+1}$, and so it follows from the induction hypothesis that
					\begin{align}
						\label{propEst5'}
						(R-r)^{-d}\sup_{ vt\le R-r}
						\langle N_{B_R}\rangle_{t}\le& (R-r)^{-d}
						\del{\del{1+C(R-r)^{-1}}\br{N_{B_{3R-2r}}}_0+C\l}\notag\\
						\le&C\del{(R-r)^{-d}R^d+1}\l.
					\end{align}
					Since, by assumption, $R>(1+\delta_0)r$, for $\mu:=1-\frac{1}{1+\delta_0}>0$ we have $R-r>\mu R$ and therefore $(R-r)^{-d}R^d\le \mu^{-d}$. This, together with  \eqref{propEst5'}, implies the desired estimate \eqref{propEst6'}.
					
					Estimate \eqref{propEst6'} correspond to \eqref{propEst6}, through which the rest of the induction step follows from the corresponding arguments in the proof of \propref{propMultScale}, mutatis mutandis.  \end{proof}
				
			}
			
			\subsection{Proof of \propref{lemLocIntRME}}\label{secPf41}
			In this section and the next one, we prove \propref{lemLocIntRME} and \propref{propMVB1} to conclude the proof of \thmref{MVBmain}.
			{For simplicity of notation, here and below we do not display the dependence on $\Lam$ in various operators.}
			
			1. For fixed $s>0\,,f\in\cE$, we define the ASTLOs according to \eqref{propag-obs1} as
			\begin{align}
				\label{ASTLO1}
				\Phi(t):=N_{f,ts}.
			\end{align}
			We start by computing the Heisenberg derivative, $D\Phi(t)$,
			defined by
			\begin{align}\label{HeisDerDef}
				D\Phi(t):=\di_t\Phi(t)+i[H,\Phi(t)].
			\end{align}
			The temporal derivative is easily computed as 
			\begin{align} \label{eq:deriv}
				\di_t\Phi(t)=-s^{-1}v' \, N_{f',ts}.
			\end{align}
			By Lemma \ref{lem:HRf-com}, we have, for $\abs{x}_{ts}=s^{-1}\del{R-v't-|x|}$,
			\begin{align}
				\big [H , \Phi(t) \big ] &= -\sum_{x,y\in\Lambda, x\neq y} J_{xy} \del{f(|x|_{ts})-f(|y|_{ts}) }a_x^*a_y , \label{eq:commut}
			\end{align}
			in the sense of quadratic forms on $\mathcal{D}(H)\cap\mathcal{D}(N)$. Thus, by standard density argument, we have for all $\varphi\in\mathcal{D}(N^{\frac12})$ that
			\begin{align}
				\abs{\ondel{i\big [H , \Phi(t) \big ]}}
				&\le\sum_{x,y\in\Lambda, x\neq y} | J_{xy}|\abs{f(|x|_{ts})-f(|y|_{ts})} \big|\big \langle   \varphi , a_x^*a_y \varphi \big \rangle\big| .
				\label{HeisEst1}
			\end{align}

			2. Next, we observe that by definitions \eqref{classE} and \eqref{ftsDef}, for any $f\in\cE$,
			$f(\abs{x}_{ts})\ne0$ only if
			\begin{align}\label{rel1}
				\abs{x}_{ts}\equiv	\frac{R-v't-|x|}{s}>\frac{\eps}{2}\quad
				\iff \quad  |x|< R-\frac{\eps s}{2} -v't.
			\end{align}
			Hence, $f(\abs{x}_{ts})-f(\abs{y}_{ts})\ne 0$ implies that either $x$ or $y$ lie in the set $(0,R-\eps s/2  -v't)$ and, in particular, $|x|\leq R-\eps s/2$ or $|y|\leq R-\eps s/2$. 
			Consequently, denoting by $\chi_X$ the characteristic functions for  $X\subset\Lam$, we have the localization estimate
			\begin{align}
				\label{fLocRel}
				\abs{f(\abs{x}_{ts})-f(\abs{y}_{ts})}\le \abs{f(\abs{x}_{ts})-f(\abs{y}_{ts})}\del{\chi_{B_{R-\eps s/2}}(x)+\chi_{B_{R-\eps s/2}}(y)}.
			\end{align}
			Applying Lemma \ref{lemSymExp} to the r.h.s. of  \eqref{fLocRel} and using that $\abs{|x|-|y|}\le \abs{x-y}$, we  obtain
			\begin{align}
				\abs{f(\abs{x}_{ts})-f(\abs{y}_{ts})}\le& 2\del{\frac{\abs{x-y}}{s}u(|x|_{ts})u(|y|_{ts})+  \sum_{k=2}^n \frac{\abs{x-y}^k}{s^k} h_k(|x|_{ts},|y|_{ts})} \notag\\
				+ & \frac{\abs{x-y}^{n+1}}{s^{n+1}}\abs{R_n(|x|_{ts},|y|_{ts})}\del{\chi_{B_{R-\eps s/2}}(x)+\chi_{B_{R-\eps s/2}}(y)},\label{fDifEst2}
			\end{align}
			where, recall, the sum is dropped if $n=1$, $u:=(f')^{1/2}$, and $h_k$, $R_n$ satisfy \eqref{hkEst}--\eqref{RnEst}, respectively.

			Inserting \eqref{fDifEst2} into \eqref{HeisEst1} yields:
			\begin{align}
				&\quad\ondel{i\big [H , \Phi(t) \big ]} \notag\\&\le 2 \sum_{x,y\in\Lambda, x\neq y} \abs{J_{xy}} \frac{\abs{x-y}}{s}u(|x|_{ts})u(|y|_{ts}) \abs{\ondel{a_x^*a_y} }\notag \\
				&\quad + 2\sum_{k=2}^n \sum_{x,y\in\Lambda, x\neq y} \abs{J_{xy}} \frac{\abs{x-y}^k}{s^k} \abs{h_k(|x|_{ts},|y|_{ts})} \abs{\ondel{a_x^*a_y }}\notag \\
				&\quad + \sum_{x,y\in\Lambda, x\neq y} \abs{J_{xy}} \frac{\abs{x-y}^{n+1}}{s^{n+1}}\abs{R_n(|x|_{ts},|y|_{ts})}\abs{\ondel{a_x^*a_y }}\del{\chi_{B_{R-\eps s/2}}(x)+\chi_{B_{R-\eps s/2}}(y)}. \label{HestEst2}
			\end{align}
			
			The first and second sums in the r.h.s. of \eqref{HestEst2} can be treated exactly as in \cite{MR4470244}*{Sect.~2.1}. For convenience of the readers, here we omit the derivations and record the results: for $\kappa$ as in \eqref{kappa} and, for $n\ge2$, some functions $j_k\in\cE$, $k=2,\ldots,n$,
			\begin{align}
				\quad\sum_{x,y\in\Lambda, x\neq y} | J_{xy}||x-y|u(|x|_{ts})u(|y|_{ts}) \big|\big \langle  \varphi ,  a_x ^*a_y \varphi \big \rangle\big|\le&\kappa\big\langle\varphi,N_{f',ts} \,\varphi\big\rangle,\label{1stEst}\\
				\sum_{x,y\in\Lambda, x\neq y} \abs{J_{xy}} \abs{x-y}^k \abs{h_k(|x|_{ts},|y|_{ts})} \abs{\ondel{a_x^*a_y }} |\le&C_{f,k}\big\langle\varphi,N_{j_k',ts} \,\varphi\big\rangle .\label{2ndEst}
			\end{align}
			Below we focus on the third sum in the r.h.s. \eqref{HestEst2}. 
			
			3. We first treat the term involving $\chi_{B_{R-\eps s/2}}(x)$. Using remainder estimate \eqref{RnEst} applying Cauchy-Schwartz and Young's inequalities, we find
			\begin{align}
				&\quad\sum_{x,y\in\Lambda, x\neq y} \abs{J_{xy}} {|x-y|^{n+1}}\abs{R_n(|x|_{ts},|y|_{ts})}\abs{\ondel{a_x^*a_y }}\chi_{B_{R-\eps s/2}}(x)\notag\\
				&\leq {C_{f,n}}\sum_{x\in B_{R-\eps s/2}}\sum_{y\in\Lambda}\abs{J_{xy}}|x-y|^{n+1}\ondel{n_x}^{1/2}\ondel{n_y}^{1/2}\notag\\
				&\leq {C_{f,n}}\sum_{x\in B_{R-\eps s/2}}\sum_{y\in\Lambda}\abs{J_{xy}}|x-y|^{n+1}\big(\ondel{n_x}+\ondel{n_y}\big).\label{R1Est1}
			\end{align}

			By condition \eqref{Kcond}, the first sum in line \eqref{R1Est1} can be bounded as
			\begin{align}
				&\sum_{x\in B_{R-\eps s/2}}\sum_{y\in\Lambda}\abs{J_{xy}}|x-y|^{n+1}\ondel{n_x}   \notag\\=&\sum_{x\in B_{R-\eps s/2}}\ondel{n_x}\sum_{y\in\Lambda}\abs{J_{xy}}|x-y|^{n+1}\notag\\
				\leq& \del{\sum_{x\in B_{R-\eps s/2}}\ondel{n_x}}\del{\sup_{x\in\Lam}\sum_{y\in\Lam}\abs{J_{xy}}\abs{x-y}^{n+1}}\notag\\
				\leq&  \kappa_n\ondel{ N_{B_{R-\eps s/2}}}.\label{R1Est2}
			\end{align}
			Plugging \eqref{R1Est2}
			back to  \eqref{R1Est1} yields
			\begin{align}
				&\quad\sum_{x,y\in\Lambda, x\neq y} \abs{J_{xy}} {|x-y|^{n+1}}\abs{R_n(|x|_{ts},|y|_{ts})}\abs{\ondel{a_x^*a_y }}\chi_{B_{R-\eps s/2}}(x)\notag\\ 
				&\leq {C_{f,n}}\del{\ondel{B_{R-\eps s/2}}+\sum_{x\in B_{R-\eps s/2}}\sum_{y\in\Lambda}\abs{J_{xy}}|x-y|^{n+1}\ondel{n_y} }.\label{R1Est4}
			\end{align}
			
			Since $J_{xy}=\bar J_{yx}$, the sum in line \eqref{HestEst2} involving $\chi_{B_{R-\eps s/2}}(y)$ can be treated by
			interchanging the summation indices in \eqref{R1Est4} and proceeding as above. Hence, we conclude that
			\begin{align}
				&\sum_{x,y\in\Lambda, x\neq y} \abs{J_{xy}} {\abs{x-y}^{n+1}}\abs{R_n(|x|_{ts},|y|_{ts})}\abs{\ondel{a_x^*a_y }}\del{\chi_{B_{R-\eps s/2}}(x)+\chi_{B_{R-\eps s/2}}(y)}\notag\\
				\le&
				{C_{f,n}}\left( \ondel{ N_{B_{R-\eps s/2}}}+\sum_{x\in B_{R-\eps s/2}}\sum_{y\in\Lambda}\abs{J_{xy}}|x-y|^{n+1}\ondel{ n_y} \right).\label{R1Est7}
			\end{align}
			This bounds the contribution of $R_n$ in \eqref{HestEst2}.

			4. Since the vector $\varphi\in\cD(N^{1/2})$ is arbitrary, combining \eqref{1stEst}, \eqref{2ndEst}, and \eqref{R1Est7} in \eqref{HestEst2} yields the operator inequality
			\begin{align}
				\label{ME1}
				{i\big [H , \Phi(t) \big ]}\le& \kappa s^{-1} N_{f',ts} +  C_{f,n} \sum_{k=2}^n s^{-k} N_{j_k' ,ts}   \notag\\&+ \frac{ C_{f,n}}{s^{n+1}} \left( N_{B_{R-\eps s/2}}+\sum_{x\in B_{R-\eps s/2}}\sum_{y\in\Lambda}\abs{J_{xy}}|x-y|^{n+1}{ n_y} \right).
			\end{align}
			This, together with \eqref{eq:deriv}, yields the following bound for the Heisenberg derivative:
			\begin{align}
				\label{ME2}
				D\Phi(t)\le & (\kappa-v') s^{-1} N_{f',ts} + C_{f,n}\sum_{k=2}^n s^{-k} N_{j_k' ,ts}   \notag\\&+\frac{ C_{f,n}}{s^{n+1}} \left( N_{B_{R-\eps s/2}}+\sum_{x\in B_{R-\eps s/2}}\sum_{y\in\Lambda}\abs{J_{xy}}|x-y|^{n+1}{ n_y} \right).
			\end{align}
			
			5. 
			Using \eqref{ME2}, we proceed as in \cite{MR4470244}*{Sect.~2.1} to conclude estimate \eqref{intRME}. Fix $\psi\in\mathcal{D}(H)\cap\mathcal{D}(N)$. Recall that we write  $\psi_t=e^{-iHt}\psi$
			and $\br{\cdot}_t=\wtdel{(\cdot)}$.
			
			By definition \eqref{HeisDerDef}, we have the relation
			\begin{align}\label{dt-Heis}
				&\frac{d}{dt}\left<\Phi(t)\right>_t =\lan D\Phi(t)\ran_t.\ 
			\end{align}
			By the fundamental theorem of calculus, $\lan \Phi(t)\ran_t= \lan \Phi(0)\ran_0+\int_0^t \p_{\tau}\left<\Phi(\tau)\right>_{\tau} d\tau$, and therefore we find
			\begin{align} \label{eq-basic}  
				\lan \Phi(t)\ran_t-\int_0^t \lan D\Phi(\tau)\ran_\tau d\tau= \lan \Phi(0)\ran_0.
			\end{align}
			Using estimates \eqref{dt-Heis}, \eqref{eq-basic},  \eqref{ME2}, together with the definition $\Phi(t)= N_{f,ts}$, we find
			\begin{align} \label{propag-est1} 
				&\quad \big\lan N_{f,ts}\big \ran_t+(v'-\kappa)s^{-1}\int_0^t\big \lan N_{f',\tau s} \big\ran_\tau d\tau\notag\\
				& \le \,\big \lan N_{f,0s}\big \ran_0+C_{f,n} \del{\sum_{k=2}^n s^{-k}\int_0^t \big\lan N_{j_k',\tau s}\big \ran_\tau d\tau +   s^{-n-1}\int_0^t P(\tau)\,d\tau},
			\end{align}
			where $$P(\tau):=\br{ N_{B_{R-\eps s/2}}}_\tau+\sum_{x\in B_{R-\eps s/2}}\sum_{y\in\Lambda}\abs{J_{xy}}|x-y|^{n+1}\br{ n_y}_\tau.$$
			Since $\kappa < v'$, \eqref{propag-est1} implies (after dropping $\lan N_{f,ts} \ran_t\ge0$ and multiplying by $s(v'-\kappa)^{-1}>0$) that
			\begin{align}
				\int_0^t \big \lan N_{f',\tau s} \big\ran_\tau d\tau\, 
				\le C_{f,v,n} \, &
				\del{s \big\lan N_{f,0s} \big\ran_0 + \sum_{k=2}^n s^{-k+1} \int_0^t \big\lan N_{j_k',\tau s} \big\ran_\tau d\tau+    s^{-n}\int_0^t P(\tau)\,d\tau}. \label{propag-est3} 
			\end{align}
			
			6.	Applying H\"older's inequality to the last integral in line \eqref{propag-est3}, we find
			\begin{align} \label{propag-est2} 
				&\int_0^t\big \lan N_{f',\tau s} \big\ran_\tau d\tau\le C_{f,v,n}\del{s\big \lan N_{f,0s}\big \ran_0+ \sum_{k=2}^n s^{-k+1}\int_0^t \big\lan N_{j_k',\tau s}\big \ran_\tau d\tau +  ts^{-n}\Rem(t)},
			\end{align}
			where the  sum should be dropped for $n=1$ and $\Rem(t)=\sup_{\tau\le t}P(\tau)$.
			If $n=1$, \eqref{propag-est2} gives estimate \eqref{intRME}. If $n\ge2$, applying \eqref{propag-est2} to the term $\int_0^t \lan N_{j'_2,\tau s}  \ran_\tau d\tau$ and using properties \eqref{E1}--\eqref{E2} for the function class $\cE$ (see \secref{secE}), we obtain
			\begin{align}
				\int_0^t\big \lan  N_{f',\tau s}\big \ran_\tau d\tau\, 
				\le C_{f,v,n}  &
				\del{s \big\lan  N_{f,0s} \big\ran_0 + \big\lan N_{j_2 , 0s} \big\ran_0 + \sum_{k=3}^n s^{-k+1} \int_0^t \big\lan N_{\tilde{j}'_k,\tau s}\big \ran_\tau d\tau   +  t s^{-n}\Rem(t)}, \label{propag-est33} 
			\end{align}
			for some $\tilde j_k\in\cE$. Repeating the procedure, we arrive at \eqref{intRME} for $\psi\in\mathcal{D}(H)\cap\mathcal{D}(N)$. By a standard density argument, this extends to all $\psi\in\mathcal{D}(N^{1/2})$.
			
			This completes the proof of  Proposition \ref{lemLocIntRME}.\qed
			
			\subsection{Proof of \propref{propMVB1}}\label{secPf42}
			
			1. Fix any $f\in\cE$.  Since $\supp f\subset (0,\infty)$ for any $f\in \cE$ (see \eqref{classE}), we have $\supp f\big(\frac{\cdot}{s}\big) \subset (0,\infty)$ for any $s>0$.
			This implies, for any $s,\,R>0$,
			\begin{equation}\label{3122}
				f(\mu/s)\le\theta(\mu),
			\end{equation}
			where, recall, 	$\theta$ is the Heaviside fnuction (see \figref{fig:f0}).
			Setting $\mu=R-\abs{x}$ in \eqref{3122} and recalling definition \eqref{ftsDef}, we conclude that
			\begin{align}
				\label{3122'}
				f_{0s}(x)\equiv f\del{\frac{R-\abs{x}}{s}}\le\theta(R-\abs{x})\equiv \chi_{B_R}(x).
			\end{align}
			Evaluating \eqref{3122'} at the initial state yields
			\begin{align}\label{local-est4}
				\big \lan N_{f,0s} \big\ran_0 \le\big \langle N_{B_R}\big\rangle_0.\end{align}

			\begin{figure}[H]
				\centering
				\begin{tikzpicture}[scale=3]
					\draw [->] (-.5,0)--(2,0);
					\node [right] at (2,0) {$\mu$};
					\node [below] at (.3,0) {$0$};
					\draw [fill] (.3,0) circle [radius=0.02];

					\draw [very thick] (-.5,0)--(.3,0);
					\draw [very thick] (.3,1)--(2,1);				
					\filldraw [fill=white] (.3,1) circle [radius=0.02];
					
					\draw [dashed, very thick] (-.5,0)--(.75,0) [out=20, in=-160] to (1.5,1)--(2,1);

					\draw [->] (1.55,.5)--(1.3,.5);
					\node [right] at (1.55,.5) {$f(\tfrac\mu s)$};
					
					\draw [->] (0,.5)--(.85,.95);
					\draw [->] (0,.5)--(-.05,.05);
					\node [left] at (0,.5) {$\theta(\mu)$};
					
				\end{tikzpicture}
				\caption{Schematic diagram illustrating \eqref{3122}}
				\label{fig:f0}
			\end{figure}

			2. Next, retaining the first term and dropping the second one in the l.h.s.~of \eqref{propag-est1}, and then using \eqref{local-est4} to bound the first term on the r.h.s., we find that there exist $C=C(f, \al,d, C_J)>0$ and functions $j_k\in\cE$ s.th.~for all $s,\,t>0$,
			\begin{align} \label{propEst10}
				&\big\lan N_{f,ts} \big \ran_t \le \,\big \langle N_{B_R}\big\rangle_0 +C \sum_{k=2}^n s^{-k}\int_0^t \big\lan N_{j_k',\tau s}\big \ran_\tau d\tau + C s^{-n-1}t\Rem(t). 
			\end{align}
			Applying \eqref{intRME} and again \eqref{local-est4} to estimate the integrated term, we deduce that for some $C=C(f, \al,d, C_J,v)>0$ and all $s\ge t>0$,
			\begin{align} \label{propag-est4} 
				\big\lan N_{f,ts} \big\ran_t \le(1+Cs^{-1}) \big\langle N_{B_R} \big\rangle_0 +  C s^{-n}  \Rem(s).  \end{align}

			3.  Again by the definition of $ \cE $,  setting \begin{align}
				\label{deltaChoice}
				\eps=v-v'>0,
			\end{align} we have $f(\frac{\mu -v't}{s})\equiv 1$ for all $\mu\ge v't+(v-v')s$. 
			
			Now, we choose 
			\begin{align}
				\label{sChoice}
				s=\eta/v>0,\quad \eta=R-r.
			\end{align} Then, for all $t< s$ and $v'<v$, we have $f(\frac{\mu-v't}{s})\equiv 1$ for  $\mu\ge \eta$. 
			This implies the estimate \begin{equation}\label{3132}
				f((\mu-v't)/s)\ge \theta(\mu-\eta),
			\end{equation}see \figref{fig:f}.
			For $\mu=R-\abs{x}$, \eqref{3132} implies
			\begin{align}
				\label{3132'}
				f_{ts}(x)\equiv f\del{\frac{R-\abs{x}-v't}{s}}\ge\theta(b-\abs{x})\equiv \chi_{B_b}(x).
			\end{align}
			Evaluating \eqref{3132'} at $\psi_t$ yields
			\begin{align}\label{locEstT}
				\big\lan N_{B_r} \big\ran_t &\le \big\lan N_{f,ts}\big \ran_t .
			\end{align}
			\begin{figure}[H]
				\centering
				\begin{tikzpicture}[scale=3]
					\draw [->] (-.5,0)--(2,0);
					\node [right] at (2,0) {$\mu$};
					
					\node [below] at (0,0) {$0$};
					\draw [fill] (0,0) circle [radius=0.02];
					
					\node [below] at (1.5,0) {$\eta$};
					\draw [fill] (1.5,0) circle [radius=0.02];

					\draw [very thick] (-.5,0)--(1.5,0);
					\draw [very thick] (1.5,1)--(2,1);
					\draw [dashed, very thick] (-.5,0)--(.1,0) [out=20, in=-160] to (.65,1)--(2,1);
					\filldraw [fill=white] (1.5,1) circle [radius=0.02];

					\draw [->] (-.1,.5)--(.3,.5);
					\node [left] at (-.1,.5) {$f(\tfrac{\mu-v't}{s})$};
					
					\draw [->] (2,.5)--(1.75,.95);
					\draw [->] (2,.5)--(1.25,.05);
					\node [right] at (2,.5) {$\theta(\mu-\eta)$};
					
				\end{tikzpicture}
				\caption{Schematic diagram illustrating \eqref{3132}.}
				\label{fig:f}
			\end{figure}
			
			4. Lastly, applying \eqref{locEstT} to estimate the l.h.s. of \eqref{propag-est4}, we conclude that for some ${C=C(f, \al,d, C_J,v)>0}$ and all $s\ge t>0$,
			\begin{align}
				\label{MVBout}
				\big\lan N_{B_r} \big\ran_t \le(1+Cs^{-1}) \big\langle N_{B_R} \big\rangle_0 +  C s^{-n}  \Rem(s),
			\end{align}
			which gives the desired estimate \eqref{propEst1}.
			
			\qed

			
			\section{Proof of \thmref{pMVB main}}\label{sec proof multi}
			In this section, we complete the proof of \thmref{pMVB main} by extending Theorem \ref{MVBmain} to higher moments. 
			
			First, we state the following higher order integral estimate:
			\begin{proposition}\label{propIntRMEp} 
				Let \eqref{Kcond} hold for $n\ge1$. 
				Then, for any $v>\kappa$,  $f\in\mathcal{E}$, and integer $p\ge1$, there exist $C=C(f,n,v,p)>0$ and functions $j_k\in\mathcal{E}$, $1\leq k\leq n$, such that for all $t,s,R>0$,
				\begin{align}\label{intRMEp}
					&\int_0^t \langle (N_{f,{\tau}s}+1)^{p-1}N_{f', {\tau}s }\rangle_{\tau}d{\tau}\le C\del{  \sum_{q=1}^p\sum_{k=1}^ns^{-k+2}\langle N_{j_k,0s}^{q} \rangle_0+   s^{-n}t\TRem_p(t)},\\
					&\TRem_p(t):=\sup_{{\tau}\le t}{\sum_{q=1}^p\del{\langle (N_{f,{\tau}s}+1)^{q-1}N_{B_{R-\eps s/2}}\rangle_{{\tau}}+\sum_{x\in B_{R-\eps s/2}}\sum_{y\in\Lambda}\abs{J_{xy}}|x-y|^{n+1}\langle (N_{f,{\tau}s}+1)^{q-1}n_y\rangle_{\tau} }}.\label{TRemDef}
				\end{align} 
			\end{proposition}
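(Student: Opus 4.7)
The plan is to prove \eqref{intRMEp} by induction on $p$, with the $p=1$ base case being \propref{lemLocIntRME} (the $q=1$, $k=1$ piece on the right-hand side absorbs the $s\langle N_{f,0s}\rangle_0$ contribution after enlarging $j_1$ via \eqref{E1}). For the inductive step, assuming the bound for all integers $1\le p'\le p-1$, I would introduce the higher-order ASTLO
\[
\Phi_p(t):=(N_{f,ts}+1)^p,
\]
which at a fixed time commutes with itself, so that the temporal derivative produces the desired quantity as the principal term with a favorable coefficient,
\[
\partial_t \Phi_p(t)=-pv' s^{-1}(N_{f,ts}+1)^{p-1}N_{f',ts}.
\]

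Next I would compute $i[H,\Phi_p(t)]$ by the telescoping identity
\[
i[H,\Phi_p]=\sum_{k=0}^{p-1}(N_{f,ts}+1)^k\,i[H,N_{f,ts}]\,(N_{f,ts}+1)^{p-1-k},
\]
substitute \lemref{lem:HRf-com} to expand each $i[H,N_{f,ts}]$ into a sum of $J_{xy}(f(|x|_{ts})-f(|y|_{ts}))\,a_x^* a_y$, and apply the symmetrized expansion of \lemref{lemSymExp} exactly as in the proof of \propref{lemLocIntRME}. To assemble the leading $\kappa s^{-1}$ piece into the shape $p\kappa s^{-1}(N_{f,ts}+1)^{p-1}N_{f',ts}$ I would use the canonical commutation identity $a_y(N_{f,ts}+1)^m=(N_{f,ts}+1+f(|y|_{ts}))^m a_y$ (and its adjoint) to move $a_x^*,a_y$ past the remaining powers of $(N_{f,ts}+1)$, then Taylor-expand the shifted power. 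The principal term combines with $\partial_t\Phi_p$ to give the gap $p(v'-\kappa)s^{-1}(N_{f,ts}+1)^{p-1}N_{f',ts}$, while the subleading shift terms, together with the $k\ge 2$ terms from \lemref{lemSymExp} and the $R_n$ remainder handled as in \eqref{R1Est7}, produce exactly the template
\begin{align*}
D\Phi_p(t) &\le -p(v'-\kappa)s^{-1}(N_{f,ts}+1)^{p-1}N_{f',ts}\\
&\quad + C\sum_{q=1}^{p}\sum_{k=2}^{n} s^{-k}(N_{f,ts}+1)^{q-1}N_{j_k',ts}+Cs^{-n-1}P_p(t),
\end{align*}
where $P_p(t)$ is the pointwise expression whose supremum over $\tau\le t$ defines $\widetilde\Rem_p(t)$ in \eqref{TRemDef}.

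Finally I would integrate using the fundamental theorem of calculus as in \eqref{eq-basic}, drop the nonnegative term $\langle\Phi_p(t)\rangle_t$, and use the pointwise bound $\langle\Phi_p(0)\rangle_0\le C\sum_{q=0}^p\langle N_{f,0s}^q\rangle_0$ to obtain
\begin{align*}
\int_0^t\langle(N_{f,\tau s}+1)^{p-1}N_{f',\tau s}\rangle_\tau\,d\tau
&\le C\Bigl(s\sum_{q=0}^p\langle N_{f,0s}^q\rangle_0\\
&\qquad +\sum_{q=1}^{p}\sum_{k=2}^{n} s^{-k+1}\!\int_0^t\!\langle(N_{f,\tau s}+1)^{q-1}N_{j_k',\tau s}\rangle_\tau\,d\tau
+s^{-n}t\,\widetilde\Rem_p(t)\Bigr).
\end{align*}
The mixed integrals on the right, indexed by $q\le p$, then fall under the induction hypothesis for moment $q-1\le p-1$ (for $q=1$ this is \propref{lemLocIntRME}), applied with $f$ replaced by a suitable $\tilde j_k\in\cE$; iterating in $k$ as in the passage from \eqref{propag-est2} to \eqref{propag-est33} and using \eqref{E1}--\eqref{E2} closes the recursion and yields \eqref{intRMEp}. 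The main obstacle is the middle step: the commutator $i[H,(N_{f,ts}+1)^p]$ does not factor cleanly as $p(N_{f,ts}+1)^{p-1}i[H,N_{f,ts}]$ because $N_{f,ts}$ and $i[H,N_{f,ts}]$ fail to commute. Resolving this via the canonical commutation relations introduces a hierarchy of shift corrections whose contributions must each be identified as either (i) part of the leading $(N_{f,ts}+1)^{p-1}N_{f',ts}$ term, (ii) a lower-moment piece $(N_{f,ts}+1)^{q-1}N_{j_k',ts}$ with $q<p$ accessible to induction, or (iii) part of the remainder $\widetilde\Rem_p(t)$; carrying out this bookkeeping uniformly in $p$ is the technical heart of the proof.
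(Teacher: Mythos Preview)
Your plan is essentially the paper's: induction on $p$, higher-order ASTLO, Leibniz/telescoping of $i[H,\Phi_p]$, commutation of $a_x^*,a_y$ through the number-operator powers, symmetrized expansion \lemref{lemSymExp}, and finally the bootstrap in $k$ as in \eqref{propag-est2}--\eqref{propag-est33}. Two differences are worth noting.

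First, the paper takes $\Phi(t)=N_{f,ts}^p$ rather than your $(N_{f,ts}+1)^p$. With the paper's choice, $\partial_t\Phi$ produces $-pv's^{-1}N_{f,ts}^{p-1}N_{f',ts}$, while the commutator bound naturally lands on $p\kappa s^{-1}(N_{f,ts}+1)^{p-1}N_{f',ts}$; the mismatch $(N_{f,ts}+1)^{p-1}-N_{f,ts}^{p-1}\le (p-1)(N_{f,ts}+1)^{p-2}$ is then the sole source of the lower-moment correction, making the induction transparent. Your choice also works but the lower-order terms arise instead from $(N_{f,ts}+2)^{p-1}-(N_{f,ts}+1)^{p-1}$, which is a cosmetic difference.

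Second, and more substantively, your ``Taylor-expand the shifted power'' is not how the paper closes the middle step, and on its own it is not enough: after any such expansion you still face off-diagonal terms $a_x^*\,M\,a_y$ with an operator $M$ in the middle, and these must be converted into diagonal $n_x,n_y$ terms before anything can be summed into ASTLOs. The paper does this by observing that $A(x)^qA(y)^{p-1-q}$ (with $A(z)=N_{f,ts}+f_{ts}(|z|)$) is positive and commutes in $x,y$, hence has an operator square root, and then applying the operator Cauchy--Schwarz inequality
\[
g(x)a_x^*M a_y g(y)+\mathrm{h.c.}\le g(x)^2 a_x^* M a_x + g(y)^2 a_y^* M a_y,
\]
followed by $A(\cdot)\le N_{f,ts}+1$ and the reverse commutation $a_x^*(N_{f,ts}+1)^{p-1}a_x\le n_x(N_{f,ts}+1)^{p-1}$. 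This single step replaces your ``hierarchy of shift corrections'' bookkeeping and yields the clean leading coefficient $p\kappa$ (not $2^{p-1}p\kappa$, which is what a crude $A\le N_{f,ts}+1$ bound \emph{before} Cauchy--Schwarz would give and which would destroy the gap $v'-\kappa$). So your outline is correct, but the mechanism you name for the ``technical heart'' should be operator Cauchy--Schwarz on the positive middle factor, not Taylor expansion.
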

			\propref{propIntRMEp} leads to the following bound on the evolution of the $p$-th power of the number operators.
			\begin{proposition}\label{propMVBp} Let \eqref{Kcond} hold for $n\ge1$. 
				Then, for any   $v>\kappa$ and integer $p\ge1$, there exists $C=C(n, C_J, v,p)>0$ such that for all $R>r>0$ and $s=(R-r)/v$,
						\begin{align}\label{propEstp}
						\sup_{t\le s}	\br{N_{B_r}^p}_t\le& {\left(1 +Cs^{-1} \right){\br{N_{B_R}^p }_0}} + Cs^{-n}{\HRem_p(s)},\\
						\HRem_p(s):=&\sup_{t\le s}\del{\langle N_{B_{R}}^p\rangle_{t}+\sum_{x\in B_{R}}\sum_{y\in\Lambda}\abs{J_{xy}}|x-y|^{n+1}\langle (1+N_{B_R}^{p-1})n_y\rangle_t }.\label{HRemDef}
					\end{align}
				
			\end{proposition}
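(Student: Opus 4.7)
The plan is to adapt the derivation of Proposition \ref{propMVB1} from Proposition \ref{lemLocIntRME}, replacing the first-order ASTLO $\Phi(t) := N_{f,ts}$ with its shifted $p$-th power $(\Phi(t)+1)^p$ and invoking the weighted higher-moment integrated estimate Proposition \ref{propIntRMEp} in place of Proposition \ref{lemLocIntRME}. A key structural observation is that all the relevant operators --- $\Phi(t)$, $\p_t\Phi(t) = -s^{-1}v' N_{f',ts}$, and the localizations $N_{B_\rho}$ --- are simultaneous functions of the commuting family $\{n_x\}_{x\in \Lam}$, so operator inequalities pass through arbitrary nonnegative polynomials and the temporal derivative factors cleanly as $\p_t(\Phi(t)+1)^p = -p v' s^{-1}(\Phi(t)+1)^{p-1} N_{f',ts}$.

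The first step is to run the Heisenberg-derivative argument of Section \ref{secPf42} at order $p$: integrating $\tfrac{d}{dt}\br{(\Phi(t)+1)^p}_t = \br{D(\Phi(t)+1)^p}_t$ from $0$ to $t$, using the non-positive principal part $-p(v'-\kappa) s^{-1}(\Phi(t)+1)^{p-1} N_{f',ts}$ (obtained after combining $\p_t(\Phi+1)^p$ with the leading contribution of the commutator $i[H,(\Phi+1)^p]$, exactly as in the passage from \eqref{ME2} to \eqref{propag-est3} for $p=1$) to absorb the surviving weighted integral $\int_0^t \br{(\Phi(\tau)+1)^{p-1} N_{f',\tau s}}_\tau d\tau$ via \eqref{intRMEp}, yields, for $t\le s$,
\begin{align*}
\br{(\Phi(t)+1)^p}_t \le \br{(\Phi(0)+1)^p}_0 + C s^{-1}\sum_{q=1}^p \br{N_{B_R}^q}_0 + C s^{-n}\,\TRem_p(s).
\end{align*}
I then invoke the spacetime localization estimates of Section \ref{secPf42}: with $\eps = v-v'$ and $s = (R-r)/v$, the pointwise bounds $f_{ts}(x) \ge \chi_{B_r}(x)$ for $t<s$ and $f_{0s}(x) \le \chi_{B_R}(x)$ (see \eqref{3132'}, \eqref{3122'}) promote, after $p$-th powers of commuting operators, to $N_{f,ts}^p \ge N_{B_r}^p$ and $N_{f,0s}^q \le N_{B_R}^q$. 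Since $N_{B_R}$ has integer spectrum and $1\le q\le p$, the intermediate moments are controlled via the pointwise inequality $n^q \le n^p + n$ on $\{0,1,2,\ldots\}$, collapsing $\sum_{q=1}^p \br{N_{B_R}^q}_0$ to the two terms $\br{N_{B_R}}_0 + \br{N_{B_R}^p}_0$ appearing in \eqref{propEstp}; similarly $(N+1)^p \le 2^{p-1}(N^p + 1)$ converts $\br{(\Phi(0)+1)^p}_0$ into $\br{N_{B_R}^p}_0$ modulo constants, and $\br{(\Phi(t)+1)^p}_t$ dominates $\br{N_{B_r}^p}_t$ on the left.

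It remains to absorb $\TRem_p(s)$ into $\HRem_1(s)+\HRem_p(s)$. Using $N_{f,\tau s}\le N_{B_{R-\eps s/2}}\le N_{B_R}$ together with $(N_{B_R}+1)^{q-1}\le 2^{p-1}(N_{B_R}^{p-1}+1)$ for $1\le q\le p$, each of the $p$ summands in \eqref{TRemDef} splits into an $N_{B_R}^{p-1}$-weighted piece (matching the $n_y$-sum of $\HRem_p$ and the bare $N_{B_R}^p$ term) plus an unweighted piece (matching $\HRem_1$), giving $\TRem_p(s)\le C(\HRem_1(s)+\HRem_p(s))$. The main obstacle, concentrated in Proposition \ref{propIntRMEp} rather than in the present argument, is the operator-ordering bookkeeping needed to expand $i[H,(\Phi+1)^p]$: cross terms of the form $(\Phi+1)^j a_x^* a_y (\Phi+1)^{p-1-j}$ must be symmetrized using the canonical commutation identity $a_x^* a_y \Phi = (\Phi + f_{ts}(x)-f_{ts}(y))a_x^* a_y$, so that after Cauchy--Schwarz the weight $(\Phi+1)^{p-1}$ emerges symmetrically on both sides of $a_x^* a_y$ and the residual remainders are captured by $\TRem_p$. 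Once Proposition \ref{propIntRMEp} is granted, the present proof is largely an exercise in localization and algebraic simplification, closely mirroring Section \ref{secPf42}.
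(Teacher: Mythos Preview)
Your overall architecture is right and mirrors the paper closely: derive an order-$p$ analogue of \eqref{propag-est4} via the Heisenberg-derivative argument, feed the resulting weighted time integrals into Proposition~\ref{propIntRMEp}, then use the geometric sandwich \eqref{3122'}--\eqref{3132'} together with the algebraic reduction $\sum_{q=1}^p N^q\le C_p(N+N^p)$ and the estimate $(N_{f,\tau s}+1)^{q-1}\le C_p(1+N_{B_R}^{p-1})$ to pass from $\TRem_p$ to $\HRem_1+\HRem_p$. This last part of your argument is correct and essentially identical to the paper's Section~\ref{secPf52}.

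The gap is your choice of observable $(\Phi+1)^p$ in place of $\Phi^p=N_{f,ts}^p$. This shift forces you to compare $\br{(\Phi(0)+1)^p}_0$ with $\br{N_{B_R}^p}_0$ via $(N+1)^p\le 2^{p-1}(N^p+1)$, which produces a leading term $2^{p-1}\br{N_{B_R}^p}_0$ plus an additive constant --- neither of which matches \eqref{propEstp}, where the leading coefficient is exactly~$1$ and there is no $O(1)$ term. (Equivalently, expanding $(\Phi+1)^p$ binomially leaves behind $\sum_{k=1}^{p-1}\binom{p}{k}\br{N_{B_R}^k}_0$ with no $s^{-1}$ prefactor.) That sharp coefficient~$1$ is not cosmetic: it is what allows the leading term to pass unchanged through the downscale induction in Proposition~\ref{propMultScalep} and into \eqref{MVBp}.

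The paper avoids this by working directly with $\Phi^p$. The fundamental theorem of calculus then gives $\br{\Phi(t)^p}_t\le\br{\Phi(0)^p}_0+\ldots$ with the exact coefficient~$1$, and the price one pays is the mismatch term $p v' s^{-1}\bigl((\Phi+1)^{p-1}-\Phi^{p-1}\bigr)N_{f',ts}$ in \eqref{DPhiEst1}, arising because the time derivative produces $\Phi^{p-1}$ while the commutator bound \eqref{HNpEst4} produces $(\Phi+1)^{p-1}$. This extra term already carries the prefactor $s^{-1}$ and, by \eqref{MVT}, is of order $p-2$ in $\Phi$; it is therefore controlled by Proposition~\ref{propIntRMEp} at order $p-1$ and lands harmlessly in the $Cs^{-1}(\br{N_{B_R}}_0+\br{N_{B_R}^p}_0)$ term. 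Your proposal would be repaired by making exactly this switch.
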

			The proofs of Props.~\ref{propIntRMEp}--\ref{propMVBp}  are deferred to Sects.~\ref{secPf51}--\ref{secPf52}. 
			
			\propref{propMVBp} allows us to prove the following result, which leads to \thmref{pMVB main}  by rescaling, as in the proof of  \thmref{MVBmain}.
			
			\begin{proposition}\label{propMultScalep}

				Let {$p\ge2$} be an integer and let \eqref{Jcond} hold with {$\al>3dp/2+1$}. 
				Then, for any $v > \kappa$, there exists $C=C(\al,d, C_J, v,p)>0$ such that for all $\l,\,k\ge0$, and initial states satisfying \eqref{UDBp},
				there holds
				{		\begin{align}
						\label{multEstp}
						\sup_{0\le vt\le 2^k}\br{N_{B_{2^k}}^p}_t\le {\left(1+2^{-k}C\right){\br{N_{B_{2^{k+1}}}^p }_0}}+ C\l^p .
				\end{align}}
			\end{proposition}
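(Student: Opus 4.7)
The plan is to carry out a double induction mirroring the proof of Proposition \ref{propMultScale}: an outer induction upward on the moment order $p$, whose base case $p=1$ is exactly Proposition \ref{propMultScale}, and for each fixed $p \ge 2$ an inner downward induction on the dyadic scale $k$ driven by Proposition \ref{propMVBp}. The $s^{-n}$ prefactor in the commutator expansion must absorb the growth of $p$-th moments on dyadic balls, and the decay of $J_{xy}$ must be fast enough to geometrically sum the far-field contribution of the mixed remainder $\br{N_{B_R}^{p-1} n_y}_\tau$. These twin requirements are what force the sharper hypothesis $\alpha > 2dp + 1$.

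\textbf{Setup and base case.} Fix $p \ge 2$ and choose $n := dp$, which is an integer with $n \ge d$. Under $\alpha > 2dp+1$, one checks that condition \eqref{Kcond} holds for this $n$ and that $\gamma := \alpha - n - 1 > dp$. Applying Proposition \ref{propMVBp} with $R = 2^{k+1}$, $r = 2^k$, $s = 2^k/v$, one obtains
\begin{align*}
\sup_{vt \le 2^k} \br{N_{B_{2^k}}^p}_t \le \br{N_{B_{2^{k+1}}}^p}_0 + C 2^{-k} \del{\br{N_{B_{2^{k+1}}}}_0 + \br{N_{B_{2^{k+1}}}^p}_0} + C 2^{-nk} \del{\HRem_1(s) + \HRem_p(s)}.
\end{align*}
The first two terms on the right are already of the form required by \eqref{multEstp}, so the inductive step hinges on bounding each of $\HRem_1(s)$ and $\HRem_p(s)$ by an acceptable multiple of $\lambda + \lambda^p$. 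The downward induction in $k$ is started, as in the proof of Proposition \ref{propMultScale}, by invoking the $p$-th moment analog of \cite[eq.~(2.9)]{SiZh} together with the uniform density \eqref{UDBp} and the trivial bound $|\Lambda| \le L^d$ on the enclosing box; this verifies \eqref{multEstp} for all $k \ge K = K(L, \alpha, d, p)$.

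\textbf{Remainder control and main obstacle.} The bound on $\HRem_1(s)$ is immediate from the outer induction hypothesis at $p=1$, namely Proposition \ref{propMultScale} applied at scale $k+1$. For $\HRem_p(s)$ in \eqref{HRemDef}, the first summand $\br{N_{B_R}^p}_\tau$ is bounded by the inductive hypothesis at scale $k+1$, and the key work is to bound the double sum $\sum_{x \in B_R}\sum_{y \in \Lambda} |J_{xy}| |x-y|^{n+1} \br{N_{B_R}^{p-1} n_y}_\tau$. We split this into a near part ($y \in B_R$) and a far part ($|y| > R$), mirroring \eqref{Idef}--\eqref{IIdef}. For $y \in B_R$, $[n_y, N_{B_R}] = 0$ and $n_y \le N_{B_R}$ give $N_{B_R}^{p-1} n_y \le N_{B_R}^p$ by joint functional calculus, so the near part reduces via \eqref{Kcond} to $\kappa_n \br{N_{B_R}^p}_\tau$, which the inductive hypothesis combined with volume growth bounds by $O(\lambda^p 2^{dpk})$; the choice $n = dp$ makes $s^{-n}$ exactly absorb this growth. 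For $y$ in the dyadic annulus $2^l < |y| \le 2^{l+1}$ with $l > k+1$, one has $y \notin B_R$, so $[n_y, N_{B_R}]=0$ and the elementary inequality $a^{p-1} b \le (a+b)^p$ for commuting nonnegative operators yields $N_{B_R}^{p-1} n_y \le N_{B_{2^{l+1}}}^p$. Proceeding as in \eqref{IIest1}--\eqref{IIest13} and applying the inductive hypothesis at scale $l+1$ produces a geometric sum
\begin{align*}
C 2^{(d-n)k} \lambda^p \sum_{l > k+1} 2^{-(\gamma - dp)l},
\end{align*}
which converges precisely because $\gamma > dp$ and is $O(\lambda^p)$ because $n \ge d$. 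The genuine new difficulty at $p \ge 2$ is the presence of the mixed operator $N_{B_R}^{p-1} n_y$ coupling a localized moment with a distant site occupation; its control via the commutation at $y \notin B_R$ and the numerical cost of this reduction, combined with the $n \ge dp$ requirement needed to absorb the near-field growth, is exactly what produces the threshold $\alpha > 2dp + 1$.
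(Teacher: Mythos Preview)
Your proposal is correct and follows essentially the same approach as the paper's proof: a downward induction on the dyadic scale $k$ driven by Proposition~\ref{propMVBp}, with the near/far splitting of the mixed remainder $\br{N_{B_R}^{p-1}n_y}_\tau$ handled exactly as you describe. Two minor cosmetic differences: the paper takes $n:=\lfloor \alpha-dp-1\rfloor$ rather than your $n=dp$ (both satisfy $n\ge dp$ and $\gamma>dp$, so either works), and the paper does not frame the argument as a ``double induction'' in $p$ since, as you yourself note, the only lower-moment input actually used is the already-proved case $p=1$ for $\HRem_1$---so the outer induction collapses to a direct citation of Proposition~\ref{propMultScale}.
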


			\begin{proof}
				We follow the downscale induction scheme as in the proof of \propref{propMultScale}. 
				Within this proof we set		 
				\begin{align}
					\label{nDefp}
					n:=\lfloor\al-{\frac{dp}{2}}-1\rfloor.
				\end{align}

				1.	By definition \eqref{HRemDef}, condition \eqref{Kcond},  and the conservation of $N$, we find the crude remainder estimate $\HRem_p(t)\le \br{N^p}_0$ for all $t$. This fact, together with estimate \eqref{propEstp} and the uniform density assumption \eqref{UDBp}, implies that \eqref{multEstp} holds for sufficient large $K$ and all $k\ge K$. (See Step 1 in \propref{propMultScale} for details.) This establishes the base case. Now, assuming \eqref{multEstp} holds for $k+1$, we prove it for $k$.

				2. Recalling definition \eqref{HRemDef} for $\HRem_p$, we apply \propref{propMVBp} with $R=2^{k+1}$ and $r=2^k$, whence $s=2^k/c$ and
				\begin{align}
					\sup_{vt\le 2^k}	\br{N_{B_{2^k}}^p}_t
					\le&{\left(1+2^{-k}C\right){\br{N_{B_{2^{k+1}}}^p }_0}}\label{propEst2p}\\
					&+2^{-nk}C\sup_{vt\le 2^k}\langle N_{B_{2^{k+1}}}^p\rangle_{t}\label{propEst2p'}\\
					&+2^{-nk}C\sup_{vt\le 2^k}\sum_{x\in B_{2^{k+1}}}\sum_{y\in\Lambda}\abs{J_{xy}}|x-y|^{n+1}\langle (1+N_{B_{2^{k+1}}}^{p-1})n_y\rangle_{t}.\label{propEst2p''}
				\end{align}
				The terms in the r.h.s.~of line \eqref{propEst2p} {are of the desired form}. To handle second term in line \eqref{propEst2p'}, we use
				the induction hypothesis and the uniform density assumption \eqref{UDBp} to obtain 
				\begin{align}
					\label{propEst13}
					2^{-dpk}	\sup_{vt\le 2^k}\langle N_{B_{2^{k+1}}}^p\rangle_{t}
					\le& 2^{-dpk}C{\del{\del{1+2^{-(k+1)}}{\br{N_{B_{2^{k+2}}}^p}_0}+\l^p}}\notag\\
					\le&C\del{\del{1+2^{-(k+1)}} 2^{2dp}+{2^{-dpk}}}\l^p .
				\end{align}
				We conclude from \eqref{propEst13} that
				\begin{align}
					\label{propEst14}
					2^{-dpk}\sup_{vt\le 2^k}\langle N_{B_{2^{k+1}}}^p\rangle_{t}\le C \l^p .
				\end{align}
				Since 	$n\ge dp$ by the assumption {$\al>3dp/2+1$} (see \eqref{nDefp}), applying estimate \eqref{propEst14}  yields the desired upper bound for the term in line \eqref{propEst2p'}.

				3.	It remains to bound the term appearing in line \eqref{propEst2p''}. We write
				\begin{align}
					&2^{-nk}\sup_{vt\le 2^k}\sum_{x\in B_{2^{k+1}}}\sum_{y\in\Lambda}\abs{J_{xy}}|x-y|^{n+1}\langle (1+N_{B_{2^{k+1}}}^{p-1})n_y\rangle_{t}\le \Mr{I}+\Mr{II}+\mathrm{I}_p+\mathrm{II}_p,\label{12p}
				\end{align}
				where $\Mr{I}$, $\Mr{II}$ are respectively given by \eqref{Idef}, \eqref{IIdef}, and 
				\begin{align}
					\mathrm{I}_p:=&2^{-nk} \sup_{vt\le 2^k}\sum_{\abs{x}\le 2^{k+1}}\sum_{\abs{y}\le 2^{k+1}}\abs{J_{xy}}|x-y|^{n+1}\langle  N_{B_{2^{k+1}}}^{p-1} n_y\rangle_{t},\label{Ipdef}\\ 
					\mathrm{II}_p:=&2^{-nk} \sup_{vt\le 2^k} \sum_{\abs{x}\le 2^{k+1}} \sum_{l> k+1}   \sum_{2^l<\abs{y}\le 2^{l+1}} \abs{J_{xy}}|x-y|^{n+1}\langle N_{B_{2^{k+1}}}^{p-1}n_y\rangle_{t}\label{IIpdef}.
				\end{align}
			Below we bound the four terms in the r.h.s. of \eqref{12p}.
			
		3.1. 	Since $N_X,\,X\subset\Lam$ has integer eigenvalues, we have 
			\begin{align}\label{Nrel}
				N_X^p\ge N_X,\quad X\subset \Lam.
			\end{align}
		This, together with the uniform density  condition \eqref{UDBp}, implies $\br{n_x}_0\le \min(\l,\l^p)$ for any $x\in\Lam$, and so
		\begin{align}\label{NXest1}
			\br{N_X}_0 \le \min (\l,\l^p)\abs{X},\quad X\subset\Lam.
		\end{align}
	By \eqref{NXest1}, estimates \eqref{Iest2'} and \eqref{IIest2} hold with $\l$ replaced by $\min (\l,\l^p)$ under condition \eqref{UDBp}. Thus  we conclude
	\begin{align}\label{12Est}
		\Mr{I}+\Mr{II}\le C\l^p.
	\end{align}

		3.2. Next, we bound $\Mr{I}_p$. 	By condition \eqref{Kcond}, the term $\Mr{I}_p$ in \eqref{Ipdef} can be bounded as 
				\begin{align}\label{Iest1p}
					\Mr{I}_p\le&2^{-nk}\del{\sup_{x\in\Lam}\sum_{y\in\Lam} \abs{J_{xy}}\abs{x-y}^{n+1}}
					\del{\sup_{vt\le 2^k}\sum_{\abs{y}\le 2^{k+1}}\langle N_{B_{2^{k+1}}}^{p-1} n_y\rangle_{t}}\notag
					\\\le & 2^{-nk}C\sup_{vt\le 2^k}\br{ N_{B_{2^{k+1}}}^p}_t.
				\end{align}
				Applying \eqref{propEst14} to \eqref{Iest1p}, we find that for $n\ge dp $, 
				\begin{align}\label{Iest2'p}
					\Mr{I}_p	\le&C  \l^p  .
				\end{align}

		3.3.	Lastly, we bound $\Mr{II}_p$. For $\g:=\al-n-1$ (see \eqref{gDef}),   we bound  the second term $\Mr{II}_p$ in \eqref{IIpdef} {using the growth condition \eqref{growthCond2}} as,  
			\begin{align}
				\Mr{II}_p\le& 2^{-nk}C_{d,p}\sup_{vt\le 2^k}\sum_{\abs{x}\le 2^{k+1}} \sum_{l> k+1} \sum_{2^l<\abs{y}\le 2^{l+1}}   \abs{x-y}^{-\g}\langle  N_{B_{2^{k+1}}}^{p-1} n_y\rangle_{t}\notag\\
				\le& 2^{-nk} C_{d,p} \sup_{vt\le 2^k}\sum_{\abs{x}\le 2^{k+1}} \sum_{l> k+1}\sum_{2^l<\abs{y}\le 2^{l+1}} \del{2^l-\abs{x}}^{-\g} \langle N_{B_{2^{k+1}}}^{p-1}n_y\rangle_{t}
				\notag\\
\le& 2^{-nk} C_{d,p} \sup_{vt\le 2^k}\sum_{\abs{x}\le 2^{k+1}} \sum_{l> k+1} \del{2^l-\abs{x}}^{-\g} \langle N_{B_{2^{k+1}}}^{p-1}N_{B_{2^{l+1}}}\rangle_{t}
				\notag\\
				\le & 2^{-nk}C_{d,p}\sum_{m=0}^{2^{k+1}}m^{d-1} \sum_{l> k+1}\del{2^l-m}^{-\g}\sup_{vt\le 2^k}\br{{N_{B_{2^{k+1}}}^{p/2}N_{B_{2^{l+1}}}^{p/2}}}_t\notag\\
				\le & 2^{-nk}C_{d,p}\sum_{m=0}^{2^{k+1}}m^{d-1} \sum_{l> k+1}\del{2^l-m}^{-\g}\sup_{vt\le 2^k}{\br{N_{B_{2^{k+1}}}^{p}}_t^{1/2}\br{N_{B_{2^{l+1}}}^{p}}_t^{1/2}}\notag\\
				\le & C_{d,p}\underbrace{\roun{2^{-(n-d)k}\sup_{vt\le 2^k}\br{N_{B_{2^{k+1}}}^{p}}_t}^{1/2}}_{\Mr{A}}\underbrace{\del{2^{-dk}\sum_{m=0}^{2^{k+1}}m^{d-1}} }_{\Mr{B}}\underbrace{\del{\sum_{l> k+1}2^{-(l-1)\g}{\roun{\sup_{vt\le 2^k}\br{N_{B_{2^{l+1}}}^p}_t}^{1/2}}}}_{\Mr{C}}
				.\label{IIest1p'}
			\end{align}
			
			{The term A in line \eqref{IIest1p'} can be bounded using \eqref{propEst14} as
				\begin{align}\label{522}
					\Mr{A}\le C 2^{-(n-d)k} 2^{dpk/2}\l^{p/2}.
				\end{align}
			}
		{Since $n\ge dp$ (see \eqref{nDefp}), we have $n-d\ge d(p-1)$. This relation, together with \eqref{522}, implies 
			\begin{align}
				\label{Abdd}
				\Mr{A} \le C \l^{p/2} \iff p\ge2.
			\end{align}
		Notice that at this point we use the assumption $p\ge2$ in a crucial way. 
		}
			The term B can be bounded identically as in \eqref{IIest11}.
			To bound the term C, we apply once again estimate \eqref{propEst14} as in \eqref{522}, yielding
			\begin{align}
				\label{IIest12p}
				\Mr{C}\le
				\sum_{l> k+1}2^{-(l-1)\g+dpl/2} 
					C \l^{p/2}.
			\end{align}
			By the definitions of $n,\,\g$ in \eqref{nDefp}, \eqref{gDef}, we have {$\g>dp/2$} and so $\sum_{l> k+1}2^{-(l-1)\g+{dpl/2}} $ converges geometrically. Thus we obtain
			\begin{align}
				\label{Cbdd}
				\Mr{C}\le C \l^{p/2}.
			\end{align}
Combining \eqref{Abdd}, \eqref{IIest11}, and \eqref{Cbdd},   we conclude 
			\begin{align}
				\label{IIest2p}
				\Mr{II}_p\le C\l^p.
			\end{align}
			
			3.4. Inserting \eqref{12Est}, \eqref{Iest2'p}, and \eqref{IIest2p} into \eqref{12p}  yields the desired upper bound for the terms in line \eqref{propEst2p''}. This completes the induction step and \propref{propMultScalep} is proved.
		\end{proof}
		
		{
			\begin{proof}[Proof of \thmref{pMVB main}]
				The proof can be easily adapted from that of \propref{propMultScalep}. See Section \ref{sec proof MVB} for details. 
			\end{proof}

		}

		\subsection{Proof of \propref{propIntRMEp}}\label{secPf51}
		We use an induction in $p$, with the base case $p=1$  established in \propref{lemLocIntRME}.
		
		1. To begin with, for fixed $p\ge1,\,s>0,\,f\in\cE$, we define, in place of \eqref{ASTLO1}, the higher order ASTLOs
		\begin{align}\label{PhiDefp}
			\Phi(t)=(N_{f,ts})^p.
		\end{align}
		From the Leibniz rule, we find
		\begin{align}
			D\Phi(t)
			= &\frac{-pv'}{s}N_{f,ts}^{p-1}N_{f',ts}
			+[iH,\Phi(t)] \notag\\
			=&\frac{-pv'}{s}N_{f,ts}^{p-1}N_{f',ts}
			+\frac{1}{2}\sum_{q=0}^{p-1} \left(N_{f,ts}^q[iH,N_{f,ts}]N_{f,ts}^{p-q-1}+\mathrm{h.c.}\right) \label{basicEqP}.
		\end{align}
		(Recall that $\mathrm{h.c.}$ stands for Hermitian conjugate.) In the second line, we used that $D\Phi$ is Hermitian to make each summand manifestly Hermitian.

		2. Our next goal is to apply Cauchy-Schwarz inequality to bound the second term in line \eqref{basicEqP}. For this, we move the number operators in between  $a_x^*$ and $ a_y$ by using the commutator identities
		\begin{align}
			\label{commutator}
			&[N_{f,ts},a_x^*]=\sum_z f_{ts}(|z|) [n_z,a_x^*]
			=\sum_z f_{ts}(|z|) a_z^* \delta_{z,x}
			=f_{ts}(|x|) a_x^*,\\
			&[a_y, N_{f,ts},]=\sum_z f_{ts}(|z|) [a_y,n_z]
			=\sum_z f_{ts}(|z|) \delta_{y,z}a_z
			=f_{ts}(|y|) a_y.\label{commutator'}
		\end{align}
		A straightforward induction based on \eqref{commutator}--\eqref{commutator'} gives
		\begin{align}\label{NaRel}
			N_{f,ts}^q a_x^* =a_x^* (N_{f,ts}+f_{ts}(|x|))^q,\qquad a_y N_{f,ts}^{p-q-1}=(N_{f,ts}+f_{ts}(|y|))^{p-q-1} a_y.
		\end{align}
		
		For simplicity notation, below we set
		\begin{align}
			A(z):=N_{f,ts}+f_{ts}(|z|).\label{Adef}
		\end{align}
		By Lemma \ref{lem:HRf-com} and relation \eqref{NaRel}, the commutator term in line \eqref{basicEqP} can be written as
		\begin{align}
			[iH,\Phi(t)] 
			=\frac{1}{2}\sum_{x\neq y} J_{xy}(f_{ts}(|x|)-f_{ts}(|y|))\sum_{q=0}^{p-1} \left(ia_x^* A(x)^qA(y)^{p-q-1}a_y+\mathrm{h.c.}\right).\label{HNpcomm}
		\end{align}
		Applying estimate \eqref{fDifEst2} to bound the difference term $f_{ts}(|x|)-f_{ts}(|y|)$ in \eqref{HNpcomm}, we find
		\begin{align}
			&\quad[iH,\Phi(t)]  \notag\\&\le  \sum_{x\neq y} \abs{J_{xy}} \frac{\abs{x-y}}{s}u(|x|_{ts})u(|y|_{ts}) \sum_{q=0}^{p-1}\left(ia_x^* A(x)^qA(y)^{p-q-1}a_y\right)\notag \\
			&\quad + \sum_{k=2}^n \sum_{x\neq y} \abs{J_{xy}} \frac{\abs{x-y}^k}{s^k} \abs{h_k(|x|_{ts},|y|_{ts})} \sum_{q=0}^{p-1}\left(ia_x^* A(x)^qA(y)^{p-q-1}a_y\right)\notag \\
			&\quad + \frac12\sum_{x\neq y} \abs{J_{xy}} \frac{\abs{x-y}^{n+1}}{s^{n+1}}\abs{R_n(|x|_{ts},|y|_{ts})}\sum_{q=0}^{p-1}\left(ia_x^* A(x)^qA(y)^{p-q-1}a_y\right)\del{\chi_{B_{R-\eps s/2}}(x)+\chi_{B_{R-\eps s/2}}(y)}\notag \\
			&\quad +\hc \label{HNpEst1}
		\end{align}
		
		Recall that $|h_k(x,y)|\leq \tilde u_k(x)\tilde u_k(y)$ and ${\abs{R_n(x,y)}\le C_{f,n}}$ (see \eqref{hkEst}--\eqref{RnEst}). Applying these and the triangle inequality to \eqref{HNpEst1}, we find
		\begin{align}
			&\quad[iH,\Phi(t)]  \notag\\&\le  \sum_{x\neq y} \abs{J_{xy}} \frac{\abs{x-y}}{s} \sum_{q=0}^{p-1}\left(iu(|x|_{ts})a_x^* A(x)^qA(y)^{p-q-1}a_yu(|y|_{ts})\right)\notag \\
			&\quad + \sum_{k=2}^n \sum_{x\neq y} \abs{J_{xy}} \frac{\abs{x-y}^k}{s^k} \sum_{q=0}^{p-1}\left(i\tilde u_k(x)a_x^* A(x)^qA(y)^{p-q-1}a_y\tilde u_k(y)\right)\notag \\
			&\quad + C_{f,n}\sum_{x\neq y} \abs{J_{xy}} \frac{\abs{x-y}^{n+1}}{s^{n+1}}\sum_{q=0}^{p-1}\left(ia_x^* A(x)^qA(y)^{p-q-1}a_y\right)\del{\chi_{B_{R-\eps s/2}}(x)+\chi_{B_{R-\eps s/2}}(y)}\notag \\
			&\quad +\hc \label{HNpEst2}
		\end{align}


		3. Now that we have moved $A(x)^qA(y)^{p-q-1}$ to the middle, we can apply Cauchy-Schwarz for operators to split up the $a_x^*$ and $a_y$ and estimate them by number operators.  
		For this we note that $A(z)$ are positive definite and $[A(x),A(y)]=0$ for any $x,\,y\in\Lam$. This implies that $A(x)^qA(y)^{p-q-1}$ is positive definite and has a well-defined operator square root $\sqrt{A(x)^qA(y)^{p-q-1}}$. At this point, it suffices to apply Cauchy-Schwarz as follows: for any function $g:\Lam\to\Rb$,
		\begin{align}
			\label{CSg}
			& g(x)a_x^* {A(x)^qA(y)^{p-q-1}}a_y g(y)+\mathrm{h.c.}\notag\\
			=&( g(x)a_x^* \sqrt{A(x)^qA(y)^{p-q-1}} \sqrt{A(x)^qA(y)^{p-q-1}}a_y g(y)+\mathrm{h.c.})\notag\\
			\leq &g(x)^2 a_x^* A(x)^qA(y)^{p-q-1} a_x +a_y^* A(x)^qA(y)^{p-q-1} a_yg(y)^2.
		\end{align}
		At this stage, we can estimate $f\leq 1$ inside \eqref{Adef} and so, for any $q=0,\ldots,p-1$,
		\begin{align}\label{CSg2}
			&g(x)^2 a_x^* A(x)^qA(y)^{p-q-1} a_x +a_y^* A(x)^qA(y)^{p-q-1} a_yg(y)^2\notag\\\leq &
			g(x)^2 a_x^* (N_{f,ts}+1)^{p-1} a_x +a_y^* (N_{f,ts}+1)^{p-1} a_yg(y)^2.
		\end{align}
		It remains to merge again the $a_x^*$ and $a_x$ that are separated by the $(N_{f,ts}+1)^{p-1} $ to create $n_x$ and therefore the desired ASTLO. 
		For this, we undo the commutation procedure. By another simple induction based on relations \eqref{commutator}--\eqref{commutator'}, we obtain
		\begin{align}\label{commRel3}
			a_x^*(N_{f,ts}+1)^{p-1}a_x
			=n_x(N_{f,ts}+1-f_{ts}(x))^{p-1}
			\leq n_x(N_{f,ts}+1)^{p-1},
		\end{align}
		where the inequality holds since $f\geq 0$. Inserting \eqref{commRel3} into \eqref{CSg2} yields, for any $q=0,\ldots,p-1$,
		\begin{align}\label{CSg3}
			&	g(x)^2 a_x^* A(x)^qA(y)^{p-q-1} a_x +a_y^* A(x)^qA(y)^{p-q-1} a_yg(y)^2\notag\\\leq &
			(g(x)^2 n_x 
			+ g(y)^2 n_y)(N_{f,ts}+1)^{p-1}.
		\end{align}
		
		4. Combining \eqref{HNpEst2} with estimates \eqref{CSg}, \eqref{CSg2}, and \eqref{CSg3}, and recalling $u^2=f'$, $\tilde u_k^2=j_k'$ (see \lemref{lemSymExp}), we conclude that
		\begin{align}
			&\quad[iH,\Phi(t)]  \notag\\&\le  p(N_{f,ts}+1)^{p-1}\sum_{x\neq y} \abs{J_{xy}} \frac{\abs{x-y}}{s} \left(f'(x) n_x 
			+ f'(y) n_y\right)\notag \\
			&\quad + p(N_{f,ts}+1)^{p-1}\sum_{k=2}^n \sum_{x\neq y} \abs{J_{xy}} \frac{\abs{x-y}^k}{s^k} \left(j_k'(x) n_x 
			+ j_k'(y) n_y\right)\notag \\
			&\quad + pC_{f,n}(N_{f,ts}+1)^{p-1}\sum_{x\neq y} \abs{J_{xy}} \frac{\abs{x-y}^{n+1}}{s^{n+1}}\left(n_x 
			+ n_y\right)\del{\chi_{B_{R-\eps s/2}}(x)+\chi_{B_{R-\eps s/2}}(y)}. \label{HNpEst3}
		\end{align}
		We proceed to bound the contributions of the three terms in the r.h.s.~of \eqref{HNpEst3} respectively as in \eqref{1stEst}, \eqref{2ndEst}, and \eqref{R1Est7}. The details are analogous to Steps 2-3 in the proof of estimate \propref{lemLocIntRME} and therefore omitted. Here we record the result (c.f. \eqref{ME1}):
		\begin{align}
			\label{HNpEst4}
			{i\big [H , \Phi(t) \big ]}\le& p(N_{f,ts}+1)^{p-1}\kappa s^{-1} N_{f',ts} +  pC_{f,n}(N_{f,ts}+1)^{p-1} \sum_{k=2}^n s^{-k} N_{j_k' ,ts}   \notag\\&+ \frac{ pC_{f,n}(N_{f,ts}+1)^{p-1}}{s^{n+1}} \left( N_{B_{R-\eps s/2}}+\sum_{x\in B_{R-\eps s/2}}\sum_{y\in\Lambda}\abs{J_{xy}}|x-y|^{n+1}{ n_y} \right).
		\end{align}

		Going back to relation \eqref{basicEqP} and isolating the first order terms, we find (c.f. \eqref{ME2})
		\begin{align}\label{DPhiEst1}
			D\Phi(t)\le& {pv'  s^{-1}((N_{f,ts}+1)^{p-1}-N_{f,ts}^{p-1})N_{f',ts}}\\
			&+{p(\kappa-v') s^{-1}(N_{f,ts}+1)^{p-1} N_{f',ts}}
			\notag\\&+pC_{f,n}(N_{f,ts}+1)^{p-1}\sum_{k=2}^n s^{-k} N_{j_k' ,ts}\notag\\&+\frac{ pC_{f,n}(N_{f,ts}+1)^{p-1}}{s^{n+1}} \left( N_{B_{R-\eps s/2}}+\sum_{x\in B_{R-\eps s/2}}\sum_{y\in\Lambda}\abs{J_{xy}}|x-y|^{n+1}{ n_y} \right)\notag.
		\end{align}
		
		5. The last three terms are of the same form as in \eqref{ME2} and can be handled as in the proof of Prop.~\ref{lemLocIntRME}. The first term appearing in line \eqref{DPhiEst1} is the price we paid out for commutators. While it is of order $s^{-1}$, the key point is that each commutator removed one $N_{f,ts}$ and accordingly also this term has one less order.
		More precisely, we claim that for any $p\ge1$, there exist $C=C(f,v,n,p)>0$ and functions $\tilde f \ge f $ and, for $n\ge2$, some  $\tilde j_2,\ldots, \tilde j_n$ in $\cE$ with $\tilde j_k'\ge j_k'$, s.th.
		\begin{align}
			\label{intpEst2}&\int_0^t \langle (N_{f,{\tau}s}+1)^{p-1} N_{f',{\tau}s}\rangle_{\tau} 
			\notag\\\leq& C\del{s \sum_{q=1}^p\langle N_{\tilde f,0s}^{q} \rangle_0+ \sum_{q=1}^p\sum_{k=2}^n s^{-k+1} \int_0^t \big\lan (N_{ f,ts}+1)^{q-1}N_{\tilde j_k',{\tau}s} \big\ran_{\tau} d{\tau}+  s^{-n}\int_0^t\tilde Q_p({\tau})},\\
			&\tilde Q_p({\tau}):=\sum_{q=1}^p\del{\langle (N_{f,ts}+1)^{p-1}N_{B_{R-\eps s/2}}\rangle_{{\tau}}+\sum_{x\in B_{R-\eps s/2}}\sum_{y\in\Lambda}\abs{J_{xy}}|x-y|^{n+1}\langle (N_{f,ts}+1)^{p-1}\label{RemTilDef} n_y\rangle_{{\tau}} }.
		\end{align}
		We prove \eqref{intpEst2} by induction. For the base case $p=1$, \eqref{intpEst2} is exactly the same as \eqref{propag-est3}. Assuming now \eqref{intpEst2} holds for $p-1$, we prove it for $p$. 
		
		Using relation \eqref{eq-basic} and recalling definition \eqref{PhiDefp}, we take expectation of and integrate estimate \eqref{DPhiEst1} to obtain
		\begin{align}
			\label{DPhiEst2}
			&\br{N_{f,ts}^p}_t+p(v'-\kappa)s^{-1}\int_0^t \langle (N_{f,{\tau}s}+1)^{p-1} N_{f',{\tau}s}\rangle_{\tau}\notag\\ \le&\br{N_{f,0s}^p}_0+ pv'  s^{-1}\int_0^t\br{(N_{f,{\tau}s}+1)^{p-1}-N_{f,{\tau}s}^{p-1})N_{f',{\tau}s}}_{\tau}\,d{\tau}\\
			& +pC_{f,n}\del{\sum_{k=2}^n s^{-k} \int_0^t \big\lan (N_{f,{\tau}s}+1)^{p-1}N_{j_k',{\tau}s} \big\ran_{\tau} d{\tau}+  s^{-n-1}\int_0^tQ_p({\tau})\,d{\tau}},\notag
		\end{align}
		where we set
		\begin{align}\label{QpDef}
			Q_p({\tau}):=&\del{\langle (N_{f,{\tau}s}+1)^{p-1}N_{B_{R-\eps s/2}}\rangle_{{\tau}}+\sum_{x\in B_{R-\eps s/2}}\sum_{y\in\Lambda}\abs{J_{xy}}|x-y|^{n+1}\langle (N_{f,{\tau}s}+1)^{p-1}\ n_y\rangle_{{\tau}} }.
		\end{align}
		Estimate \eqref{DPhiEst2} plays the role of \eqref{propag-est1}, except for that now we have to deal with the new difference term appearing in the r.h.s..
		
		\begin{align}
			\label{MVT}
			(h+1)^{p-1}-h^{p-1}\leq (p-1)(h+1)^{p-2},
		\end{align}
		and so
		\[
		(N_{f,ts}+1)^{p-1}-N_{f,ts}^{p-1}\leq
		(p-1)(N_{f,ts}+1)^{p-2}.
		\]
		Thus \eqref{DPhiEst2} becomes
		\begin{align}
			\label{DPhiEst3}
			&\br{N_{f,ts}^p}_t+p(v'-\kappa)s^{-1}\int_0^t \langle (N_{f,{\tau}s}+1)^{p-1} N_{f',{\tau}s}\rangle_{\tau}\notag\\ \le&\br{N_{f,0s}^p}_0+ {p(p-1)v'  s^{-1}\int_0^t\br{(N_{f,{\tau}s}+1)^{p-2}N_{f',{\tau}s}}_{\tau}\,d{\tau}}\\
			& +pC_{f,n}\del{\sum_{k=2}^n s^{-k} \int_0^t \big\lan (N_{f,{\tau}s}+1)^{p-1}N_{j_k',{\tau}s} \big\ran_{\tau} d{\tau}+  s^{-n-1}\int_0^tQ_p({\tau})\,d{\tau}}.\notag
		\end{align}
		Rearranging estimate \eqref{DPhiEst3} as in {the derivation of \eqref{propag-est3}}, we  obtain:
		\begin{align}
			&\int_0^t \langle (N_{f,{\tau}s}+1)^{p-1} N_{f',{\tau}s}\rangle_{\tau} 
			\leq (p-1)C_{f,c,n} \int_0^t \langle (N_{f,{\tau}s}+1)^{p-2} N_{f',{\tau}s}\rangle_{\tau}\label{intpEst1}
			\\&\qquad+C_{f,c,n}\del{{p^{-1}s} \langle N_{f,0s}^{p} \rangle_0+ \sum_{k=2}^n s^{-k+1} \int_0^t \big\lan (N_{f,{\tau}s}+1)^{p-1}N_{j_k',{\tau}s} \big\ran_{\tau} d{\tau}+  s^{-n}\int_0^tQ_p({\tau})\,d{\tau}}.\notag
		\end{align}

		Note that the term appearing in the r.h.s.~of line \eqref{intpEst1} is of the same form as in the l.h.s., but with exactly one order less in $p$. Hence, applying the induction hypothesis and using property \eqref{E2} {(see \secref{secE})}, we conclude \eqref{intpEst2}. This completes the induction and \eqref{intpEst1} is proved.
		
		Using estimate \eqref{intpEst1}, H\"older's inequality, and the basic properties of class $\cE$ as above, we can derive iteratively estimate \eqref{intRMEp} as in Step 6 in the proof of Prop.~\ref{lemLocIntRME}. This completes the proof of \propref{propIntRMEp}.  
		\qed
		
		\subsection{Proof of \propref{propMVBp}}\label{secPf52}
		We follow the proof of \propref{propMVB1} and omit detailed derivation that can be easily adapted from the latter. 
		
		Fix any function $f\in\cE$. Rearranging estimate \eqref{DPhiEst3} as in {the derivation of \eqref{propEst10}} and recalling definitions \eqref{QpDef}, \eqref{TRemDef}, we  obtain:
		\begin{align}
			\label{propEst11}
			\br{N_{f,ts}^p}_t\le&\br{N_{f,0s}^p}_0+ {p(p-1)v'  s^{-1}\int_0^t\br{(N_{f,{\tau}s}+1)^{p-2}N_{f',{\tau}s}}_{\tau}\,d{\tau}}\\
			& +pC_{f,n}\del{\sum_{k=2}^n s^{-k} \int_0^t \big\lan (N_{f,{\tau}s}+1)^{p-1}N_{j_k',{\tau}s} \big\ran_{\tau} d{\tau}+  s^{-n-1}t\TRem_p(t)}.\notag
		\end{align}
		Note that the integrated terms above are of lower order in $s$. 
		Therefore, applying \eqref{propIntRMEp} to estimate the integrated terms in \eqref{propEst11},
		and recalling properties \eqref{E1}--\eqref{E2} for the function class $\cE$ (see \secref{secE}),
		we find that there exists function $\tilde f\in\cE$ s.th. for all $s\ge t>0$,
		\begin{align}\label{propEstp1}
			\big\lan N_{f,ts} ^p\big\ran_t \le\br{N_{f,0s}^p}_0+ Cs^{-1}\sum _{q=1}^p\big\langle N_{\tilde f,0s}^q \big\rangle_0 +  C s^{-n}  \TRem_p(s).  
		\end{align}

		Next, we simplify the r.h.s. of \eqref{propEstp1} and remove the $f$-dependence. 
		To begin with, analogous to \eqref{local-est4} and \eqref{locEstT}, we have,  for $s=(R-r)/c$ and any $p\ge1$, $f\in\cE$,
		\begin{align}
			  N_{f,0s}^p  &\le  N_{B_R}^p,\label{geoPest0}\\
			N_{B_r}^p  &\le  N_{f,ts}^p.\label{geoPestt}
		\end{align}
		Due to the algebraic identity $\sum_{q=1}^p (1+h)^{q-1}\le C_p(1+h^{p-1})$, we have
		\begin{align}\label{536'}
			\sum_{q=1}^p (N_{f,ts}+1)^{q-1}\le& C_p (1+N_{f,ts}^{p-1}).
		\end{align}
		In particular, \eqref{536'} implies
		\begin{align}
			\sum _{q=1}^p  N_{ f,0s}^q  \le 	\sum _{q=1}^p  (N_{ f,0s}^{q-1}+1)N_{ f,0s}\le C_p\del{ N_{ f,0s}+N_{ f,0s}^p}.
		\end{align}
		This, together with relation \eqref{geoPest0}, yields
		\begin{align}\label{551}
			\sum _{q=1}^p N_{ f,0s}^q\le C_p\del{{N_{B_R} }+{N_{B_R}^p }}.
		\end{align}
	{By relation \eqref{Nrel}, estimate \eqref{551} becomes
			\begin{align}\label{551'}
		\sum _{q=1}^p  N_{ f,0s}^q  \le C_p{{N_{B_R}^p }}.
	\end{align}
	Applying \eqref{551'} with $f$ replaced by $\tilde f$ and evaluating at the initial state yields the desired estimated for the second term in the r.h.s. of \eqref{propEstp1}.
}		

		Finally, we deal with the remainder term. Since $f$ is monotonously increasing (see \eqref{classE}), we have by definition \eqref{ftsDef} that $f_{ts}\le f_{0s}$ for all $v',\,t>0$. This, together with estimate \eqref{3122'}, implies 
		\begin{align}\label{ftsUB}
			f_{ts}(x)\le\chi_{B_R}(x),
		\end{align}
		and so, for any $p\ge0$,
		\begin{align}
			\label{geot'}
			{N_{f,ts}^p}\le {N_{B_{R}}^p}.
		\end{align}
		Recalling definition \eqref{TRemDef}, we conclude from \eqref{536'}, \eqref{geot'}, and \eqref{Nrel} that
		{\begin{align}
				\label{TRemEst}
			\TRem_p(s)=&\sup_{t\le s}{\sum_{q=1}^p\del{\langle (1+N_{f,ts})^{q-1}N_{B_{R-\eps s/2}}\rangle_{t}+\sum_{x\in B_{R-\eps s/2}}\sum_{y\in\Lambda}\abs{J_{xy}}|x-y|^{n+1}\langle (1+N_{f,ts})^{q-1}n_y\rangle_t }}\notag\\
				\le& C_p\sup_{t\le s}{ \del{\langle   (1+N_{f,ts}^{p-1})N_{B_{R-\eps s/2}}\rangle_{t}+\sum_{x\in B_{R-\eps s/2}}\sum_{y\in\Lambda}\abs{J_{xy}}|x-y|^{n+1}\langle (1+N_{f,ts}^{p-1})n_y\rangle_t }}\notag\\
\le& C_p\sup_{t\le s}\del{\langle (1+N_{B_R}^{p-1})N_{B_{R}}\rangle_{t}+\sum_{x\in B_{R}}\sum_{y\in\Lambda}\abs{J_{xy}}|x-y|^{n+1}\langle (1+N_{B_R}^{p-1})n_y\rangle_t }\notag\\
\le& C_p\sup_{t\le s}\del{\langle N_{B_R}^{p}\rangle_{t}+\sum_{x\in B_{R}}\sum_{y\in\Lambda}\abs{J_{xy}}|x-y|^{n+1}\langle (1+N_{B_R}^{p-1})n_y\rangle_t }
.
		\end{align}}
		This bounds the remainder term in \eqref{propEstp1}.
		
		Inserting \eqref{geoPestt}, \eqref{551}, and \eqref{TRemEst} into \eqref{propEstp1}, 
		we conclude the desired estimate \eqref{propEstp}.


		
		\qed
		
		\subsection*{Acknowledgment}	
		The authors thank J.~Faupin and I.~M.~Sigal for enjoyable and fruitful collaborations. The research of M.L.\ and C.R.~ is supported by the Deutsche Forschungsgemeinschaft (DFG, German Research Foundation) through grant TRR 352--470903074. 
		J.Z.~is supported by the National Key R \& D Program of China 2022YFA100740.

		\bibliographystyle{alpha}
		\bibliography{bibfile}
		
	\end{document}